\title{Symmetric Synthesis\footnote{
This work was partially supported by the Institutional Strategy of the University
of Bremen, funded by the German Excellence Initiative, by the German Research
Foundation (DFG) within the program ``Performance Guarantees for Computer Systems''
and the Transregional Collaborative Research Center ``Automatic Verification and
Analysis of Complex Systems'' (SFB/TR 14 AVACS),
and by the European Research Council (ERC) Grant OSARES (No.~683300).}
}
\author[1]{R\"udiger Ehlers}
\author[2]{Bernd Finkbeiner}
\affil[1]{University of Bremen, Bremen, Germany\\
  \texttt{ruediger.ehlers@uni-bremen.de}}
\affil[2]{Saarland University, Saarbr\"ucken, Germany\\
  \texttt{finkbeiner@cs.uni-saarland.de}}
\authorrunning{R.\,Ehlers and B.\,Finkbeiner} 
\subjclass{D.2.4 Formal Methods}
\keywords{Reactive Synthesis, Symmetry}
\newcommand\donotshow[1]{}
\newcommand{\NN}{\mathbb{N}}
\newcommand{\ZZ}{\mathbb{Z}}
\newcommand{\AP}{\mathsf{AP}}
\newcommand{\LTLX}{\mathsf{X}}
\newcommand{\LTLF}{\mathsf{F}}
\newcommand{\LTLG}{\mathsf{G}}
\newcommand{\LTLU}{\mathcal{U}}
\newcommand{\newterm}[1]{\emph{#1}}
\newcommand{\TRUE}{\mathbf{true}}
\newcommand{\mylabel}{\label}
\newcommand{\FALSE}{\mathbf{false}}
\DeclareMathOperator{\rot}{rot}
\DeclareMathOperator{\MOD}{mod}
\DeclareMathOperator*{\argmin}{argmin}
\DeclareMathOperator{\rep}{rep}
\DeclareMathOperator{\reps}{rep_\mathrm{S}}
\newcommand{\quotedwedge}{``$\wedge$''}
\newcommand{\quotedvee}{``$\vee$''}
\newtheorem{proposition}[theorem]{Proposition}
\renewcommand\copyrightline\relax
\begin{document}

\maketitle

\begin{abstract}
We study the problem of determining whether a given temporal
specification can be implemented by a symmetric system, i.e., a system
composed from identical components. Symmetry is an important goal in
the design of distributed systems, because systems that are composed
from identical components are easier to build and maintain.  We show
that for the class of rotation-symmetric architectures, i.e.,
multi-process architectures where all processes have access to all
system inputs, but see different rotations of the inputs, the
symmetric synthesis problem is EXPTIME-complete in the number of
processes. In architectures where the processes do not have access to
all input variables, the symmetric synthesis problem becomes
undecidable, even in cases where the standard distributed synthesis problem is decidable.
\end{abstract}

\section{Introduction}

Many classical protocols and distributed systems are
\emph{symmetric}. This means that every process, independently of its
identity, starts in the same initial state and follows the same set of
transitions. Symmetric systems are easier to understand and maintain; especially in VLSI
designs, which usually contain large numbers of identical components,
this is a significant cost factor. Constructing symmetric systems is 
also a step towards building arbitrarily scalable systems \cite{DBLP:conf/podc/EmersonS90,DBLP:journals/toplas/AttieE98,DBLP:journals/corr/JacobsB14}.

There is a large body of results~\cite{DBLP:conf/stoc/Angluin80,Lehmann+Rabin/81/Advantages,Cohen+Lehmann+Pnueli/84/Symmetric,323598,Yamashita+Kameda/88/Computing,Kranakis/96/Symmetry} that deal with the question of which distributed systems need symmetry breaking and which do not. \emph{Leader election} among the
processes on a ring, for example, cannot be implemented symmetrically
\cite{DBLP:conf/stoc/Angluin80}; similarly, in resource-sharing problems, like
the \emph{Dining Philosophers}, the only way to avoid starvation is to
break the symmetry~\cite{Lehmann+Rabin/81/Advantages}.

Our goal is to automate this type of reasoning. Given a specification
of a reactive system in temporal logic, we wish to automatically
determine whether there exists a symmetric implementation.
This is a refinement of the classic
\emph{distributed
  synthesis problem}, which asks whether a temporal specification has an implementation where the processes are arranged in a particular architecture. Distributed synthesis is well-studied~\cite{Wolper/82/Synthesis,DBLP:conf/focs/PnueliR90,Kupferman+Vardi/97/Incomplete,Kupferman+Vardi/00/Mu,Kupferman+Vardi/01/Synthesizing,DBLP:conf/lics/FinkbeinerS05,DBLP:conf/atva/ScheweF07a}. However, the
approach presented in this paper is the first to 
synthesize \emph{symmetric} implementations.
  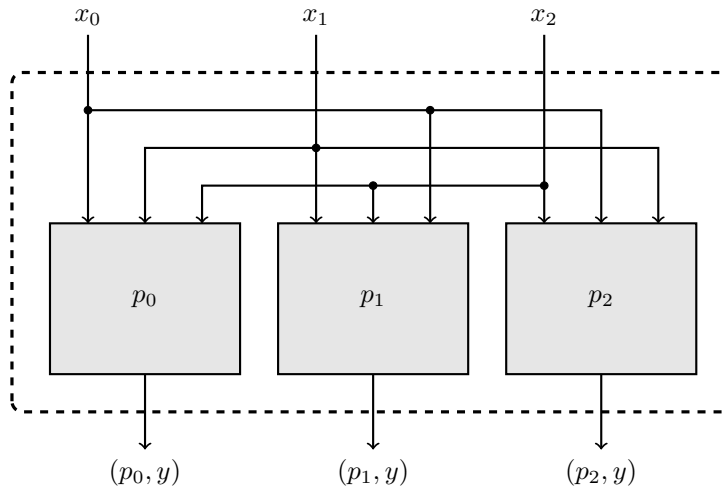
\begin{figure}
   \centering\begin{tikzpicture} 
    \draw[very thick, dashed, rounded corners] (0,0) rectangle (9.5,4.5);
    \draw[thick,fill=black!10!white] (0.5,0.5) rectangle +(2.5,2); \draw[thick,fill=black!10!white] (3.5,0.5) rectangle +(2.5,2); \draw[thick,fill=black!10!white] (6.5,0.5) rectangle +(2.5,2); \node at (1.75,1.5)
    {$p_0$}; \node at (4.75,1.5)
    {$p_1$}; \node at (7.75,1.5)
    {$p_2$};
    \draw[thick,->] (1,5.0) -- node[at start,above]
    {$x_0$} (1,2.5); \draw[thick,->] (4,5.0) -- node[at start,above]
    {$x_1$} (4,2.5); \draw[thick,->] (7,5.0) -- node[at start,above]
    {$x_2$} (7,2.5);
    \draw[thick,->] (7,3) -- (2.5,3) -- (2.5,2.5); \draw[thick,<->] (1.75,2.5) -- (1.75,3.5) -- (8.5,3.5) -- (8.5,2.5); \draw[thick,->] (1,4) -- (7.75,4) -- (7.75,2.5); \draw[thick,->] (5.5,4) -- (5.5,2.5); \draw[thick,->] (4.75,3) -- (4.75,2.5); \draw[fill=black] (4,3.5) circle (0.5mm); \draw[fill=black] (5.5,4) circle (0.5mm); \draw[fill=black] (4.75,3) circle (0.5mm); \draw[fill=black] (7,3) circle (0.5mm); \draw[fill=black] (1,4) circle (0.5mm); 
    \draw[thick,->] (1.75,0.5) -- node[at end,below]
    {$(p_0,y)$} (1.75,-0.5); \draw[thick,->] (4.75,0.5) -- node[at end,below]
    {$(p_1,y)$} (4.75,-0.5); \draw[thick,->] (7.75,0.5) -- node[at end,below]
    {$(p_2,y)$} (7.75,-0.5);
   \end{tikzpicture}
   \caption{A simple rotation-symmetric architecture.}
   \label{fig:simpleRotationSymmetricArchitecture}
  \end{figure}
  We consider \emph{rotation-symmetric} system
  architectures. Rotation-symmetric architectures are multi-process
  architectures where all processes have access to all system inputs,
  but see different rotations of the inputs.
  Figure~\ref{fig:simpleRotationSymmetricArchitecture} shows a simple
  rotation-symmetric architecture. Rotation-symmetric architectures
  are suitable to reason about distributed systems that lack a central
  coordination process.  They can, for example, model leader election
  scenarios and distributed traffic light
  controllers~\cite{DBLP:books/daglib/0033466}. The fact that the
  processes obtain their input in different rotations is important:
  since all processes have the same
  implementation, they would otherwise also produce the same output.  The
  synthesis problem for such systems could trivially be reduced to the
  standard synthesis problem by adding a constraint that the outputs
  are the same all the time.
  
  We present an algorithm for the synthesis of symmetric systems in rotation-symmetric architectures from specifications in linear-time temporal logic (LTL).
  Most standard synthesis algorithms follow the automata-theoretic
approach~\cite{Rabin1972}, whereby the given temporal formula is translated into a tree
automaton that accepts exactly those computation trees that satisfy
the formula.  Hence, the specification is realizable if and only if
the language of the automaton is non-empty. The synthesis algorithm then simply
extracts some finite-state implementation
from the language of the automaton.  The situation is more
difficult when we wish to decide the existence of a symmetric
solution, because the language of the automaton may contain both
computation trees that belong to symmetric implementations and
computation trees that belong to asymmetric implementations.  As we
show in Section~\ref{sec:rotationSymmetricSynthesis}, symmetry is \emph{not} a regular property: we 
therefore cannot check symmetry with a separate tree automaton or
encode symmetry as a temporal logic formula and add it to the
specification.

The key insight of our algorithm is that the
paths in the computation trees produced by symmetric implementations
are guaranteed to be invariant under rotations: if, in each position
of two (finite or infinite) computation paths, the values of the
input variables of the $j$th process in the first path correspond to the values of
the input variables of the $((j+k) \MOD n)$th process, for some $k$, in the second
path, then the values of the output variables of the $j$th process must also, in each
position, correspond to the values of the output variables of the
$({(j+k) \MOD n})$th process (for all $0 \leq j < n$, where $n$ is the number of processes).
Our algorithm exploits this observation to simplify the computation
trees. Paths that are just rotations of each other are collapsed into
a single representative. Computations in different processes that must
lead to identical outputs are thus kept in the same path of the
reduced tree; the paths only split when the symmetry is broken by some
input. While symmetry is difficult to check on the original
computation tree, it becomes a local condition on individual paths in
the reduced tree: as long as the output never spontaneously introduces
asymmetry, i.e., as long as every asymmetry in the output can be
explained by a previous asymmetry in the input, the reduced tree can
be expanded into a full computation tree that we know, by
construction, to be symmetric.

As we show in Section~\ref{sec:synthesis},
the running time of our synthesis algorithm is single-exponential in the number of processes. In Section \ref{sec:complexity}, we show that our algorithm is asymptotically optimal: the problem is EXPTIME-complete in the number of processes.
In Section~\ref{sec:undecidability}, we study the extension of the synthesis problem to
the case where the processes no
longer have access to all variables. Here, our result is
negative: under incomplete information, the symmetric synthesis
problem is undecidable even for system architectures where the
standard synthesis problem is decidable.
This paper is based on previously unpublished results from the first author's PhD thesis~\cite{DBLP:books/daglib/0033466}, where also additional details of the presented results can be found.

\section{Preliminaries}

A \newterm{reactive system} produces a valuation to the output propositions in some set $\AP^O$ and reads the values of the input propositions in some set $\AP^I$ in every step of its execution. The behavior of a reactive system can be described as a \newterm{computation tree} $\langle T, \tau \rangle$, where $T = (2^{\AP^I})^*$ is the set of tree nodes and $\tau : T \rightarrow 2^{\AP^O}$ labels every tree node $t$ by the output propositions $\tau(t)$ that the system sets to $\TRUE$ after having read $t$ as its (prefix) input sequence.

A \newterm{trace} in a computation tree $\langle T, \tau \rangle$ is an infinite sequence $(\tau(\epsilon) \cup t_0) (\tau(t_0) \cup t_1) (\tau(t_0 t_1) \cup t_2) (\tau(t_0 t_1 t_2) \cup t_3) \ldots \in (2^{\AP^I \cup \AP^O})^\omega$. 
Given some language $L \subseteq (2^{\AP^I \cup \AP^O})^\omega$, \newterm{reactive synthesis} is the process of checking if there exists a computation tree $\langle T, \tau \rangle$ with $T = (2^{\AP^I})^*$ as node set such that every trace of $\langle T, \tau \rangle$ is in $L$.
A classical logic to denote specification languages is linear temporal logic (LTL, \cite{DBLP:conf/focs/Pnueli77}). LTL formulas for reactive system specifications are built according to the grammar
\begin{equation*}
\varphi :== p \mid \neg \varphi \mid \varphi \vee \varphi \mid \varphi \wedge \varphi \mid \LTLX \, \varphi \mid \LTLG\,\varphi \mid \LTLF \, \varphi \mid \varphi \, \LTLU \, \varphi,
\end{equation*}
using the temporal operators $\LTLG$ (\newterm{globally}), $\LTLF$ (\newterm{eventually}), $\LTLX$ (\newterm{next}), and $\LTLU$ (\newterm{until}). All elements from $\AP^I$ and $\AP^O$ can be used as propositions $p$. A more formal definition of LTL is given in \cite{DBLP:conf/focs/Pnueli77,Clarke1999}.

For LTL specifications, it is known that if and only if there exists a computation tree all of whose traces satisfy a specification (i.e., the specification is \newterm{realizable}), there exists a \newterm{regular} such computation tree. A computation tree is regular if it has only finitely many different sub-trees. 
Given a computation tree $\langle T, \tau \rangle$, a tree $\langle T', \tau' \rangle$ is a \newterm{sub-tree} of $\langle T, \tau \rangle$ if and only if $T = T'$ and there exists a $\hat t \in T$ such that for every $t \in T$, we have $\tau'(t) = \tau(\hat t t)$.
Regular computation trees can be translated to \newterm{finite-state machines} and implemented in hardware or software using a finite amount of memory.
A tree language for some sets $\AP^I$ and $\AP^O$ is a subset of all trees $\langle T, \tau \rangle$ with $T = ({2^{\AP^I}})^*$ and $\tau : T \rightarrow 2^{\AP^O}$. A tree or word language is called \newterm{regular} if it can be recognized by some finite tree or word automaton (with a \newterm{Muller acceptance condition}, see \cite{DBLP:conf/birthday/2008thomas} for details). 

In \newterm{distributed synthesis}, we search for a distributed implementation of a finite state-machine. Given is an architecture that defines several \newterm{processes} and the \newterm{signals} that connect the processes among themselves and with the global input and output of the architecture. Starting from a specification over all signals, we search for implementations for all of the processes such that the computation tree \newterm{induced} by the process implementations and the architecture satisfies the specification. In the induced computation tree, all processes are executed at the same time and in parallel, using the usual parallel composition semantics.

It is known since the seminal work by Pneuli and Rosner \cite{DBLP:conf/focs/PnueliR90} that not all architectures have a decidable distributed synthesis problem. Figure~\ref{fig:A0Architecture} depicts the \newterm{A0 architecture} that they defined as an example for an undecidable architecture. 
Finkbeiner and Schewe~\cite{DBLP:conf/lics/FinkbeinerS05} later proved that the distributed synthesis problem is decidable if and only if there exists no \newterm{information fork} in the architecture.
An information fork is a pair of processes that are incomparably informed, i.e., for which each of the processes has access to some global input that the other process cannot read.
For a more formal definition of distributed synthesis, the interested reader is referred to \cite{DBLP:conf/lics/FinkbeinerS05}.

A \newterm{Turing machine} is a tuple $M=(Q,\Sigma,\Gamma, \allowbreak \delta, \allowbreak q_0, \allowbreak g)$ in which $Q$ is a finite set of states, $\Sigma$ is an input alphabet, $\Gamma \supseteq \Sigma$ is a (finite) tape alphabet, $\delta : Q \times \Gamma \rightarrow (Q \times \Gamma \times \{-1,0,1\})^2$ encodes the Turing machine transition function, $q_0 \in Q$ is an initial state, and $g$ maps every state to its type, which can be \newterm{accepting}, \newterm{rejecting}, or \newterm{transient}. The $\delta$ function maps every state/tape content combination to exactly two possible successor state/tape content/tape motion combinations. For deterministic Turing machines, the two successor combinations are always the same.
\newterm{Alternating Turing machines} \cite{DBLP:journals/jacm/ChandraKS81} extend the non-deterministic Turing machines by  partitioning the transient states into  \newterm{universally branching} and \newterm{existentially branching states}. An (alternating) Turing machine accepts a word $w \in \Sigma^*$ if there exists an accepting run tree when starting in state $q_0$ with the tape empty except for a copy of $w$ where the machine head starts on the first character of $w$. In all universal states, the Turing machine execution must be accepting for both possible transitions.

We assume that the \newterm{modulo function} always returns a non-negative number, such that, e.g., $-13 \mod 5 = 2$.

\section{The Symmetric Synthesis Problem}

We consider distributed reactive synthesis problems in which all processes share the same implementation. A process has an \newterm{interface} $\mathcal{N} = (\AP^I,\AP^O)$ with the local input proposition set $\AP^I$ and a local output proposition set $\AP^O$. The connections between the processes are described in an \newterm{architecture}.
  \begin{definition}[Symmetric architecture]
   Given an interface
   $\mathcal{N} = (\AP^I,\AP^O)$, a \newterm{symmetric architecture} over
   $\mathcal{N}$ is a tuple
   $\mathcal{E} =
   (S,P,\AP^I_G,E^\mathit{in},E^\mathit{out})$ with:
   \begin{itemize}
   \item
    the set of (internal) signals
    $S$,
   \item
    the \newterm{process set}
    $P$, 
   \item
    the \newterm{global input signal set}
    $\AP_G^I$, 
   \item
    the \newterm{input edge function}
    $E^\mathit{in} : (P \times \AP^I) \rightarrow (S \cup \AP^I_G)$, and
   \item
    the \newterm{output edge function}
    $E^\mathit{out} : (P \times \AP^O) \rightarrow
    S$.
   \end{itemize}
  \end{definition}
As an example, the architecture given in the right part of Figure~\ref{fig:S0Architecture} hosts processes with the interface $\mathcal{N} = (\{a\},\{b\})$ and has the components $S = \{y,z\}$, $P = \{0,1\}$, $\AP^I_G = \{x\}$, $E^\mathit{in} = \{ (0,a) \mapsto x, (1,a) \mapsto y\}$, and $E^\mathit{out} = \{ (0,b) \mapsto y, (1,b) \mapsto z\}$.
We only consider architectures in which every internal signal is written to by exactly one local output of one process. Given a FSM for a process with an interface $\mathcal{N}$ and an architecture $\mathcal{E} =
   (S,P,\AP^I_G,E^\mathit{in},E^\mathit{out})$ over $\mathcal{N}$, we can construct an FSM with $\AP^I_G$ as input proposition set and $S$ as output proposition set that implements the behavior of the complete architecture when using the FSM as process implementation. Without loss of generality, we use the standard synchronous composition semantics to do so.
We define the symmetric synthesis problem as follows:
\begin{definition}
\label{def:symmetricSynthesisProblem}
Given an interface $\mathcal{N} = (\AP^I,\AP^O)$, an architecture $\mathcal{E} =
   (S,P,\AP^I_G,E^\mathit{in}, \allowbreak{} E^\mathit{out})$, and a specification $\varphi$ over the propositions $\AP^I_G \cup S$, 
    \newterm{the symmetric synthesis problem} is to check if an FSM implementation $\mathcal{F}$ with the input proposition set $\AP^I$ and output proposition set $\AP^O$ exists such that the FSM obtained by plugging $\mathcal{F}$ into $\mathcal{E}$ satisfies $\varphi$. In case of a positive answer, we also want to obtain $\mathcal{F}$.
\end{definition}   

\section{Rotation-Symmetric Synthesis}
\label{sec:rotationSymmetricSynthesis}
\label{sec:synthesis}

Many symmetric architectures found in practice consist of a ring of processes, all of which read all the input to the overall system. A slight generalization of this architecture shape is the class of \newterm{rotation-symmetric architectures}.

  \begin{definition}
   \label{def:rotationSymmetricArchitecture} A symmetric architecture
   $\mathcal{E} =
   (S,P,\AP^I_G,E^\mathit{in},E^\mathit{out})$ over the interface
   $\mathcal{N} = (\AP^I,
   \AP^O)$ with $n$ processes is called \newterm{rotation-symmetric} if and only if there exists a \newterm{local designated proposition set} $\AP^I_L$ for every process instance such that the following conditions hold:
   \begin{itemize}
   \item
    $\AP^I_G = \AP^I = \AP^I_L \times \{0, \ldots,
     {n-1}\}$ and $P = \{p_0, \ldots, p_{n-1}\}$.
   \item
    $S = \AP^O \times \{0, \ldots, n-1\}$
   \item
    for every
    $p_i \in P$, every $x \in \AP^I_L$, and
    every $j \in \{0, \ldots,
     n-1\}$, we have
    $E^\mathit{in}(p_i,(x,j)) = (x,(j-i) \mod n)$, and
   \item
    for every
    $x \in
    \AP^O$ and $p_i \in P$, we have
    $E^\mathit{out}(p_i,x) =
    (x,i)$.
   \end{itemize}
  \end{definition}
We show in this section that the symmetric synthesis problem for rotation-symmetric architectures and linear-time temporal logic (LTL) is decidable.

The key observation that we use to prove decidability is that the computation trees that characterize the input/output behavior of a process implementation plugged into a rotation-symmetric architecture have a useful property that we call the \newterm{symmetry} property. While this property is non-regular and thus cannot be encoded into the specification (Lemma~\ref{lem:nonRegular}), we show how to decompose it into two sub-properties, one of which is regular. The other one is still non-regular, but has the advantage that we can enforce it in a synthesis process by post-processing the computation tree obtained from a synthesis procedure to contain only rotations of the computation tree paths along so-called \newterm{normalized inputs}. Since every tree with the symmetry property is left unaltered by this step and we also describe how to ensure that the result of the post-processing step is guaranteed to be a correct solution, this approach is sound and complete.

We assume some fixed rotation-symmetric architecture $\mathcal{E} = (S,P,\AP^I_G,E^\mathit{in}, \allowbreak E^\mathit{out})$ over some local process interface $(\AP^I_L,\AP^O)$ to be given, define $\mathcal{I} = 2^{\\AP^I_G}$ to denote the global input alphabet to all processes, while $\mathcal{O} = 2^{\{\AP^O \times \{0, \ldots, n-1\}\}}$ \label{def:mathcalO} denotes the global output.
The local output of one process is given as $O = 2^{\AP^O}$.

The following rotation function will become useful in the analysis below. Let
  $U = 2^{\AP \times
   \{0, \ldots, n-1\}}$ for some other set
  $\AP$.
  We define a \newterm{rotation operator} 
  $\rot : U \times \ZZ \rightarrow
  U$ with
  $\rot(u,k) = \{(p,(j+k) \MOD n) \mid (p,j) \in
   u\}$ for every
  $u \in
  U$ and
  $k \in
  \ZZ$.
  Furthermore, we extend the
  $\mathrm{rot}$ function to LTL formulas and define
  $\mathrm{rot}(\psi,k)$ for an LTL formula
  $\psi$ over the set of propositions
  $\AP \times
  \{0, \ldots, n-1\}$ and
  $k \in
  \ZZ$ to be
  $\psi$ with all atomic propositions
  $(p,j)$ replaced by
  $(p,(j+k) \MOD
  n)$ for
  $p \in
  \AP$,
  $j \in
  \ZZ$.
  For cla\-ri\-ty, when dealing with the
  $\rot$ function for some set
  $U = 2^{\AP \times
   \{0, \ldots, n-1\}}$, we often partition the elements of
  $\AP \times
  \{0, \ldots, n-1\}$ by their process indices and for example write
  $(X_0, \ldots, X_{n-1})$ instead of
  $(X_0 \times \{0\}) \cup \ldots \cup
  (X_{n-1} \times \{ n-1 \})$ for $X_0, \ldots, X_{n-1} \subseteq \AP$.
  The rotation function is extended to sequences of elements in
  $U$ by rotating the individual sequence items.

  \begin{definition}[Symmetry property]
   Given a tree
   $\langle T, \tau
   \rangle$ over
   $T =
   \mathcal{I}^*$ and
   $\tau : T \rightarrow
   \mathcal{O}$, we say that the tree has the \newterm{symmetry property} if for each
   $t \in
   T$ and
   $0 \leq i <
   n$,
   $\tau(\rot(t,i)) =
   \rot(\tau(t),i)$.
  \end{definition}
 \begin{lemma}[Symmetry lemma]
   \label{SymmetryLemma} The set of regular trees having the symmetry property is precisely the same as the set of trees that are induced by a rotation-symmetric architecture for some process implementation.
\end{lemma}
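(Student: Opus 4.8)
The plan is to prove the two set inclusions separately, exhibiting for each direction an explicit translation between process implementations and tree labelings. Throughout I will rely on the one algebraic fact that makes all the index bookkeeping go through: the rotations form an action of $\ZZ/n\ZZ$, i.e.\ $\rot(\rot(u,a),b) = \rot(u,a+b)$ for all $a,b \in \ZZ$, and the same holds when $\rot$ is applied to sequences and to $\tau$-labels; in particular $\rot$ is $n$-periodic, so the symmetry property, although stated for $0 \le k < n$, holds for every integer $k$. The bridge between the two worlds is the observation, read off from $E^\mathit{in}$ and $E^\mathit{out}$, that when a process implementation $\mathcal{F} : \mathcal{I}^* \rightarrow O$ is plugged into $\mathcal{E}$, process $p_i$ observes the global input history $w$ as the rotated history $\rot(w,-i)$ and writes its local output onto track $i$. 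Hence the induced tree $\langle T, \tau_{\mathcal{F}}\rangle$ has the labeling
\[
  \tau_{\mathcal{F}}(w) = \bigcup_{i=0}^{n-1} \mathcal{F}(\rot(w,-i)) \times \{i\}.
\]

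For the inclusion ``induced $\Rightarrow$ regular with symmetry'', regularity is immediate: as already recorded in the preliminaries, plugging a finite-state $\mathcal{F}$ into $\mathcal{E}$ produces a single FSM over inputs $\AP^I_G$ and outputs $S$, and the computation tree of an FSM has only finitely many distinct subtrees. For the symmetry property I would compute both sides from the displayed formula. Using $\rot(\rot(w,k),-i) = \rot(w,k-i)$ gives $\tau_{\mathcal{F}}(\rot(w,k)) = \bigcup_i \mathcal{F}(\rot(w,k-i)) \times \{i\}$, whereas rotating the label gives $\rot(\tau_{\mathcal{F}}(w),k) = \bigcup_i \mathcal{F}(\rot(w,-i)) \times \{(i+k)\MOD n\}$. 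Reindexing the latter union by $\ell = (i+k)\MOD n$ turns its $\ell$-track into $\mathcal{F}(\rot(w,k-\ell))$, so the two unions agree track by track, establishing $\tau_{\mathcal{F}}(\rot(t,k)) = \rot(\tau_{\mathcal{F}}(t),k)$ for all $t$ and all $0 \le k < n$.

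For the converse inclusion, let $\langle T, \tau\rangle$ be regular and satisfy the symmetry property. I would define a candidate process implementation by reading off the behavior on track $0$, namely $\mathcal{F}(v) := \{x \in \AP^O \mid (x,0) \in \tau(v)\}$ for $v \in \mathcal{I}^*$. Since $\mathcal{F}$ is a pointwise projection of $\tau$, any two nodes with equal $\tau$-subtrees have equal $\mathcal{F}$-subtrees, so regularity of $\langle T,\tau\rangle$ transfers to $\langle T,\mathcal{F}\rangle$ and $\mathcal{F}$ is implementable as an FSM. It then remains to verify $\tau_{\mathcal{F}} = \tau$. Applying the symmetry property, $\mathcal{F}(\rot(w,-i)) = \{x \mid (x,0) \in \tau(\rot(w,-i))\} = \{x \mid (x,0) \in \rot(\tau(w),-i)\} = \{x \mid (x,i) \in \tau(w)\}$, whence $\mathcal{F}(\rot(w,-i)) \times \{i\} = \tau(w) \cap (\AP^O \times \{i\})$; taking the union over $0 \le i < n$ reassembles exactly $\tau(w)$.

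The two translations are thus mutually inverse in the required sense: ``induce'' always lands in the class of regular trees with the symmetry property, and for every such tree ``project onto track $0$'' yields an FSM whose induced tree is the original. Together the two computations give the claimed set equality. I expect the genuine obstacle to be nothing deeper than the rotation orientation: one must keep the direction of $\rot$ consistent across $E^\mathit{in}$ (how $p_i$ reads), $E^\mathit{out}$ (where $p_i$ writes), and the symmetry property, and then trust the single identity $\rot(\rot(\cdot,a),b)=\rot(\cdot,a+b)$ to carry the reindexing steps with the correct signs. The regularity claims, by contrast, are routine, resting only on the FSM-composition statement from the preliminaries and on the fact that a pointwise projection never increases the number of distinct subtrees.
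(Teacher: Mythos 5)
Your proof is correct and follows essentially the same route as the paper's: the forward direction is a direct rotation computation on the induced labeling, and the converse projects onto process $0$'s output track and uses the symmetry property to reassemble the full label, with regularity preserved because the projection is pointwise. The paper carries out the same argument at the level of Moore machines and equivalence classes of subtrees (with an explicit induction showing the symmetric product tracks the tuple $([\rot(t,0)],\ldots,[\rot(t,-n+1)])$), but the underlying decomposition and the key use of the symmetry identity $\tau(\rot(t,i))|_{\mathcal{O}_0}=\tau(t)|_{\mathcal{O}_i}$ are the same as yours.
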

A proof of the lemma can be found in the appendix. The symmetry property is not a regular tree property, and hence cannot be encoded into a tree or word automaton.

  \begin{lemma}
  \label{lem:nonRegular}
   The set of symmetric computation trees for the two-process rotation-symmetric architecture
   with process interface $\mathcal{N}=(\AP^I_L \times \{0,1\}, \AP^O)$ and $\AP^I_L = \{i\}$ and $\AP^O = \{o\}$ is not a regular tree language. 
  \end{lemma}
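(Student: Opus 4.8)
The plan is to show that membership in the language forces an unbounded \emph{copy-at-a-distance} constraint between two disjoint subtrees of the computation tree, and that no finite tree automaton can enforce such a constraint. First I would fix concrete notation for the two-process case: writing each input letter of $\mathcal{I}$ as a pair $(X_0, X_1)$ with $X_0, X_1 \subseteq \{i\}$ and each output label of $\mathcal{O}$ as $(Y_0,Y_1)$ with $Y_j \subseteq \{o\}$, the operator $\rot(\cdot,1)$ is just the swap of the two coordinates. The two ``diagonal'' input letters $\emptyset$ and $\{(i,0),(i,1)\}$ are fixed by the swap, while the two ``off-diagonal'' letters $\alpha := \{(i,0)\}$ and $\beta := \{(i,1)\}$ are exchanged, $\rot(\alpha,1) = \beta$.

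Next I would unfold the symmetry property $\tau(\rot(t,1)) = \rot(\tau(t),1)$ according to the first letter of the node $t$. For $t = \alpha t'$ one has $\rot(\alpha t',1) = \beta\,\rot(t',1)$, so the property relates the $\alpha$-subtree to the $\beta$-subtree: writing $\tau_\alpha(t') = \tau(\alpha t')$ and $\tau_\beta(t') = \tau(\beta t')$, symmetry is equivalent to $\tau_\beta = \rho(\tau_\alpha)$, where $\rho$ is the tree-rotation that rotates both node addresses and labels. Crucially, since the first letter $\alpha$ is always rotated to $\beta$, no symmetry constraint ever maps a node of the $\alpha$-subtree back into itself; hence the $\alpha$-subtree may be chosen completely freely, and then the $\beta$-subtree is forced to be exactly its rotation $\rho(\tau_\alpha)$. (The remaining constraints only require the root label and the two diagonal subtrees to be self-symmetric, which I would satisfy by a fixed canonical choice.) So, restricted to trees with a fixed symmetric root label and fixed canonical diagonal subtrees, the symmetric trees are exactly those whose $\beta$-subtree is the $\rho$-image of an arbitrary $\alpha$-subtree: a pure copy relation between two siblings.

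The core of the argument is then a standard ``equality is not recognizable'' fooling argument. Assuming the language were recognized by a tree automaton with state set $Q$, I would use the fact that a top-down run processes the children of the root from finitely many states: a tree with the fixed root part is accepted if and only if there is a pair $(q_\alpha, q_\beta)$ from a finite set $F \subseteq Q \times Q$ such that the $\alpha$-subtree is accepted from $q_\alpha$ and the $\beta$-subtree from $q_\beta$. Picking infinitely many pairwise distinct $\alpha$-subtrees $A_1, A_2, \ldots$ and pigeonholing on $F$, I would find $j \neq k$ and one pair $(q_\alpha,q_\beta)$ accepting both the symmetric tree built from $A_j$ and the one built from $A_k$. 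Then the ``mixed'' tree with $\alpha$-subtree $A_j$ and $\beta$-subtree $\rho(A_k)$ is also accepted --- all four subtree-runs are accepting and can be glued --- hence symmetric, which forces $\rho(A_k) = \rho(A_j)$ and thus $A_j = A_k$, a contradiction.

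The main obstacle, and the step I would be most careful about, is the automaton-theoretic bookkeeping for infinite trees: I need that acceptance ``from state $q$'' on a subtree is well defined and that accepting runs on the four child-subtrees can be glued into an accepting run on the whole tree. This is where the Muller condition matters --- every infinite branch lies in exactly one child-subtree, so the finite root prefix is irrelevant to the acceptance condition and the gluing is sound, while conversely the restriction of any accepting run to a child-subtree is accepting. I would also double-check the one structural claim the whole argument rests on --- that the $\alpha$-subtree carries no internal symmetry obligation --- since if it did, the family $A_1, A_2, \ldots$ could not be chosen freely and the copy relation would degrade.
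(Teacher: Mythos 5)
Your proof is correct, but it follows a genuinely different route from the paper's. The paper exhibits one particular symmetric tree whose labels along the input path $(\emptyset,\{i\})^\omega$ form a sequence with aperiodically growing gaps between the $\{o\}$-outputs, and then invokes the pumping lemma for regular tree languages to pump that path while its rotated twin $(\{i\},\emptyset)^\omega$ keeps the original label sequence, destroying symmetry. You instead isolate the structural heart of the matter: for $n=2$ the rotation swaps the two off-diagonal root directions $\alpha=\{(i,0)\}$ and $\beta=\{(i,1)\}$ and never maps the $\alpha$-subtree into itself, so (after fixing a canonical symmetric root label and canonical self-symmetric diagonal subtrees) the language restricted to this shape is exactly the copy relation $\tau_\beta=\rho(\tau_\alpha)$ with $\tau_\alpha$ arbitrary; a pigeonhole on the finitely many root transitions, together with the observation that Muller acceptance is insensitive to the finite root prefix and hence runs on sibling subtrees can be exchanged and re-glued, then yields an accepted non-symmetric ``mixed'' tree. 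Your argument buys self-containedness: it needs no pumping lemma for infinite trees (whose application in the paper's proof --- in particular, why pumping one branch leaves the labels on the rotated branch untouched --- is itself left informal), and it makes explicit exactly which structural feature of the symmetry property is non-regular, namely unbounded equality-at-a-distance between siblings. The paper's argument is shorter if one takes the pumping lemma as a black box and has the minor additional virtue of locating the failure of regularity already on a single pair of paths. Both correctly establish the lemma; your claim that the $\alpha$-subtree carries no internal symmetry obligation, which you rightly flag as the load-bearing step, does hold, since $\rot(\alpha t',1)=\beta\,\rot(t',1)$ always lands in the $\beta$-subtree.
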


  \begin{proof}
   For a proof by contradiction, suppose that the set of symmetric computation trees is regular.
   The language includes a tree with the symmetry property in which the node labels on the path
   $(\emptyset,\{i\})^*$ and, symmetrically, on the path
   $(
   \{i\},\emptyset)^*$ form the sequence
   $l=( \emptyset, \emptyset )^1 ( \{o\},\{o\})  ( \emptyset, \emptyset )^2 ( \{o\},\{o\})
   \ldots$, i.e., the length of the
   $( \emptyset, \emptyset )$-sequences grows according to the distance to the root.
   According to the pumping lemma for regular tree languages, however, the sequence
   $l$ can be partitioned into
   $l=u \cdot v \cdot
   w$, such that, for every
   $k>0$, there exists a tree in the language where the label sequence on
   $(
   \emptyset,\{i\})^*$ is
   $l=u \cdot v^k \cdot
   w$, while the label sequence on
   $(
   \{i\},\emptyset)^*$ is
   still~$l$.
   Clearly, these trees are not symmetric.
  \end{proof}
Since the symmetry property is non-regular, we need to alter the synthesis process itself to account for it. In order to synthesize an implementation for \emph{one} process, we synthesize implementations for \emph{all} processes together. 
These only need to work correctly on \newterm{normalized input sequences} $t \in \mathcal{I}^*$. An input sequence is \newterm{normalized} if $\min_i \rot(t,i) = t$, where the $\min$ function uses the lexicographic ordering over the strings in $\mathcal{I}^*$. For the ordering of the elements in $\mathcal{I}$, we consider the lexicographic ordering of their tuple representation. For example, we have $(0,1,0) < (0,1,1)$ and $(0,1,0) < (1,0,0)$ for a three-process architecture.
A tree with the symmetry property is fully determined by the labels along normalized input sequences, as for every non-normalized input sequence $t' \in \mathcal{I}^*$, we have $\tau(t') = \rot(\tau(t),i)$ for every $i$ such that $t' = \rot(t,i)$.

When only considering the normalized input sequences during synthesis, we can take the computation tree for all processes in the architecture together and complete it by filling all other tree labels with rotations of the tree labels along normalized inputs. We call the resulting tree its \newterm{symmetric completion}\label{text:ImplicitDefSymmetricCompletion}. If afterwards, we have $\tau(\rot(t,i)) =
   \rot(\tau(t),i)$ for all $t \in \mathcal{I}^*$ and $i \in \NN$, then the symmetry lemma guarantees that the resulting tree is induced by some process instantiated in a rotation-symmetric architecture. So if we can guarantee that (1) $\tau(\rot(t,i)) = \rot(\tau(t),i)$ is actually the case for all normalized $t$ and $i \in \NN$ and (2) that the symmetric completion of the tree satisfies the specification along all paths, then we can obtain a correct process implementation by synthesizing a computation tree for the complete architecture. Our construction for symmetric synthesis consist of these two components, which we describe in more detail below.

\subsection{Ensuring Symmetric Completability}
\label{subsec:definitionOfReps}
Not every $\mathcal{O}$-labeled computation tree can easily be made symmetric by replacing the tree labels for non-normalized input sequences. Take for example a tree $\langle T, \tau \rangle$ for the architecture given in Figure~\ref{fig:simpleRotationSymmetricArchitecture} with $\tau(\epsilon)=(\emptyset,\emptyset,\{y\})$. Since the output of the processes is initially different, this means that they cannot have the same implementation. We show in this section that detecting such cases is simple, and the formalization of the observation is a regular property that can be easily encoded into LTL. 
\begin{definition}
   \label{def:reAndReps} Let
   $\AP$ be some set, and
   $P = \{p_0, \ldots,
    p_{n-1}\}$ be a list of process identifiers.
   For every
   $x \subseteq (\AP \times
   \{0, \ldots, n-1\})$ and
   $w = w_0 w_1 w_2 \ldots w_l \in (2^{\AP \times
    \{0, \ldots, n-1\}})^*$, we define
   \begin{align*}
    \rep(x) = &
    \ |\{j \in \{0,\ldots,n-1\} \mid \rot(x,j) = x \} |\\
    \reps( \epsilon ) = &
    \ n \\
    \reps(w) = &
    \,\gcd(\reps(w_0 \ldots w_{n-1}),\rep(w_n)),
   \end{align*}
   where $\gcd$ denotes the \newterm{greatest common divisor} function.
  \end{definition}
For some word $t \in \mathcal{I}^*$, $\reps(t)$ represents how many different rotations in $\{0,\ldots, n-1\}$ of $t$ exist that map the word to itself.

\begin{lemma}[Second symmetry lemma]
   \label{lem:symmetryLemma2} Let
   $\langle T, \tau
   \rangle$ be a computation tree with
   $T =
   \mathcal{I}^*$ and
   $\tau : T \rightarrow
   \mathcal{O}$ for which for every
   $t \in
   T$, we have that
   $\reps(t) \divides \reps(\tau(t))$ (where the $\divides$ symbol refers to division without remainder).
   The unique symmetric completion of
   $\langle T, \tau
   \rangle$ has the symmetry property.
   Furthermore, if
   $\langle T, \tau
   \rangle$ is regular, then so is
   its unique symmetric completion.
  \end{lemma}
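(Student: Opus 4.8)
The plan is to reduce the whole statement to one group‑theoretic fact and then do the bookkeeping. First I would record that $\rot$ is an action of the cyclic group $\ZZ/n\ZZ$, i.e.\ $\rot(\rot(u,a),b)=\rot(u,a+b)$ and $\rot(u,0)=u$. Consequently, for any $u$ the stabilizer $H_u=\{j\in\{0,\ldots,n-1\}\mid \rot(u,j)=u\}$ is a subgroup of $\ZZ/n\ZZ$, and for a word $w$ a rotation fixes $w$ iff it fixes every letter, so $H_w=\bigcap_m H_{w_m}$ is again a subgroup, with $|H_w|=\reps(w)$ (this matches the interpretation stated just before the lemma). Since the subgroups of a cyclic group are totally ordered by divisibility of their orders, I obtain the key equivalence $H_u\subseteq H_v \iff |H_u|\divides|H_v|$. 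The hypothesis $\reps(t)\divides\reps(\tau(t))$ thus translates \emph{exactly} into $H_t\subseteq H_{\tau(t)}$, and this is the only place the assumption enters.

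Next I would settle well-definedness, which is what ``the \emph{unique} symmetric completion'' refers to. For a normalized $m$ and a node $t'=\rot(m,i)=\rot(m,i')$ in its orbit, applying the group action gives $\rot(m,i-i')=m$, so $(i-i')\MOD n\in H_m\subseteq H_{\tau(m)}$, whence $\rot(\tau(m),i)=\rot(\tau(m),i')$. Therefore setting $\tau'(t')=\rot(\tau(m),i)$ is independent of the chosen witness $i$, so $\langle T,\tau'\rangle$ is a single, uniquely determined tree. The symmetry property is then immediate from the construction: writing $t=\rot(m,a)$ with $m$ normalized, $\tau'(\rot(t,i))=\tau'(\rot(m,a+i))=\rot(\tau(m),a+i)=\rot(\rot(\tau(m),a),i)=\rot(\tau'(t),i)$, using only that $\rot$ is a group action. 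This proves the first assertion.

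For the regularity claim I would assume $\tau$ is produced by a finite-state machine $\mathcal{A}$ with transition $\delta$, output $\lambda$, and initial state $q_0$, so that $\tau(t)=\lambda(\delta^*(q_0,t))$, and build a finite machine for $\tau'$. The crucial observation is that the normalized prefix can be maintained online. Let $I\subseteq\{0,\ldots,n-1\}$ be the set of indices $i$ with $\rot(t',i)=\mathrm{norm}(t')$ (the lexicographically least rotation). After reading a letter $c\in\mathcal{I}$, only the indices attaining $d:=\min_{j\in I}\rot(c,j)$ can stay minimal, so $I$ shrinks to $I'=\{i\in I\mid \rot(c,i)=d\}\subseteq I$; because every surviving witness still reads $\mathrm{norm}(t')$ on the first $|t'|$ letters, one checks $\mathrm{norm}(t'c)=\mathrm{norm}(t')\cdot d$. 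Hence tracking the pair $(q,I)$, where $q=\delta^*(q_0,\mathrm{norm}(t'))$ and $I$ is this index set, is finite‑state: the transition is $(q,I)\mapsto(\delta(q,d),I')$, and both coordinates range over finite sets. The output at $(q,I)$ is $\rot(\lambda(q),-i^*)$ for any $i^*\in I$, which equals $\tau'(t')$ by the symmetry property; it is independent of $i^*\in I$ by exactly the divisibility argument of the previous paragraph, now applied to $m=\mathrm{norm}(t')$ (for $i_1,i_2\in I$ one has $(i_2-i_1)\MOD n\in H_{t'}=H_{\mathrm{norm}(t')}\subseteq H_{\tau(\mathrm{norm}(t'))}$, as rotations commute). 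This produces a finite machine generating $\tau'$, so $\tau'$ is regular.

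The hard part will be the regularity argument, specifically the incremental normalization: normalization is a global, non-monotone operation, so it is not obvious it can be computed by a finite automaton. The heart of the matter is the observation that the candidate-index set $I$ can only shrink as letters are read, which forces $\mathrm{norm}$ to be extended one letter at a time and keeps the state space finite; I would expect the short combinatorial verification that $I'$ is exactly the new argmin set (and that no previously discarded index can become minimal again) to be the one routine point needing care. The second delicate point, reused verbatim from the well-definedness step, is that the emitted label must not depend on the representative $i^*\in I$, and this is precisely where the hypothesis $\reps(\cdot)\divides\reps(\tau(\cdot))$ is invoked a second time.
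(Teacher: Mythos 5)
Your proof is correct. The first half coincides in substance with the paper's argument: the paper's auxiliary lemma on evenly spaced neutral rotations is exactly your observation that the stabilizer of any element under $\rot$ is a subgroup of $\ZZ/n\ZZ$, so that $\reps(t)\divides\reps(\tau(t))$ is equivalent to containment of stabilizers; this gives well-definedness of the completion, and the symmetry property then follows from the group action alone. Where you genuinely diverge is the regularity argument. The paper's machine has state space $S\times\{0,\dots,n-1\}$: it stores an equivalence class of tree nodes (with representatives forced, without loss of generality, to be normalized and to separate nodes with different $\reps$ values) together with a single rotation offset, and it re-normalizes from scratch at every step by computing $\argmin_{j}\rot(\rot(s,k)i,j)$ on the class representative, so the correctness of its transition relation leans on the equivalence classes carrying enough information to determine the lexicographic argmin. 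Your machine instead carries the full set $I$ of minimizing rotation indices and maintains the normalized prefix incrementally via $\mathrm{norm}(t'c)=\mathrm{norm}(t')\cdot d$ and $I'=\{i\in I\mid\rot(c,i)=d\}$; this costs a factor of $2^n$ rather than $n$ in the state space (irrelevant for a regularity claim) but is self-contained --- it runs the original machine directly on the normalized word and needs no assumptions about how the regular tree is presented as an automaton. Both constructions invoke the divisibility hypothesis a second time, at the same point, to make the emitted label independent of the chosen rotation witness.
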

By the second symmetry lemma, it suffices for a computation tree to have $\reps(t) \divides \reps(\tau(t))$ for all $t \in T$ to ensure that the symmetric completion of the tree has the symmetry property. We can encode this requirement in LTL as
\begin{align*}
\varphi_{\mathit{outcond}} = \bigwedge_{d \in \{1, \ldots, n\}, d \divides n} \neg (\mathrm{sym}(\mathcal{I},d,n) \, \LTLU \, \neg \mathrm{sym}(\mathcal{O},d,n))
\end{align*}
for the function
\begin{align*}
\mathrm{sym}(\AP,d,n) = \bigwedge_{a \in \AP, j \in \{0, \ldots, n-1\}} (a,j) \leftrightarrow (a,j+\frac{n}{d})
\end{align*}
that encodes, for each $i \subseteq \AP \times \{0, \ldots, {n-1}\}$ whether $d \mid \rep(i)$ (for $d \in \NN$ with $d \divides n$).

\subsection{Ensuring That the Tree Completion Satisfies the Specification}
\label{subsec:ensuringThatTheTreeCompletionSatisfiesTheSpec}
If we have a computation tree $\langle T, \tau \rangle$ all of whose traces satisfy some linear-time specification $\varphi$, this does not imply that its rotation-symmetric completion satisfies $\varphi$ as well. If all traces of $\langle T, \tau \rangle$ however satisfy $\varphi \wedge \rot(\varphi,1) \wedge \ldots, \wedge \rot(\varphi,n-1)$, then since we know that every infinite trace in the rotation-symmetric completion is a rotation of a trace in the original tree by some value $i \in \NN$, we know that the rotation-symmetric completion also satisfies $\varphi$ along every trace. 
So if we synthesize a tree for $\varphi' = \varphi \wedge \rot(\varphi,1) \wedge \ldots \wedge \rot(\varphi,n-1)$ as specification instead of $\varphi$, taking the rotation-symmetric completion maintains $\varphi$.

Note that strengthening $\varphi$ to $\varphi'$ comes without loss of generality if we are interested in rotation-symmetric implementations. By the symmetry property, if the tree $\langle T, \tau \rangle$ induced by a rotation-symmetric architecture and a process implementation satisfies $\varphi$, then it also satisfies $\rot(\varphi,i)$ for all $i \in \NN$ as every rotation of every trace in the tree is also a trace in the tree. Hence, to satisfy $\varphi$, it also needs to satisfy $\rot(\varphi,i)$ as otherwise we could take a trace not satisfying $\rot(\varphi,i)$, rotate it by $-i$, and obtain a trace that does not satisfy $\varphi$.

\subsection{Putting Everything Together}

Using the concepts defined above, we are now ready to tie them together to a complete synthesis process. We start with a specification $\varphi$ over the architecture input propositions $\AP^I_G$ and the output proposition set $\AP^O \times \{0, \ldots, n-1\}$ for $|P|=n$.
\begin{enumerate}
\item We modify the specification $\varphi$ to $\varphi' = \varphi \wedge \rot(\varphi,1) \wedge \ldots \wedge \rot(\varphi,n-1)$.
\item We modify $\varphi'$ to $\varphi'' = \varphi' \wedge \varphi_{\mathit{outcond}} $ (as described in Section~\ref{subsec:ensuringThatTheTreeCompletionSatisfiesTheSpec}).
\item We synthesize a regular tree $\langle T, \tau \rangle$ that satisfies $\varphi''$ along all paths using a classical reactive synthesis procedure. If there is no such tree, the specification is unrealizable.
\item If a regular computation tree $\langle T, \tau \rangle$ is found, we replace every label along non-normalized directions by rotations of $\tau$'s labels along normalized directions to get a tree $\langle T', \tau' \rangle$ with the symmetry property.
\item We cut off the labels of $\tau'$ except for the output of the first process in the architecture. The resulting (regular) tree is the synthesized process implementation.
\end{enumerate}

\begin{proposition}
The above synthesis process from LTL has a complexity that is 2EXPTIME in the length of the specification and exponential-time in the number of processes.
\end{proposition}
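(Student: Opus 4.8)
The plan is to bound each of the five steps separately and to track exactly how the two parameters, the formula length $|\varphi|$ and the process count $n$, enter the running time. First I would measure the size of the constructed formula $\varphi''$. Since $\rot(\cdot,k)$ merely renames atomic propositions, each conjunct $\rot(\varphi,k)$ has the same length as $\varphi$, so $|\varphi'| = O(n\cdot|\varphi|)$. The formula $\varphi_{\mathit{outcond}}$ is a conjunction over the (at most $n$) divisors of $n$, where each conjunct is built from $\mathrm{sym}$-formulas of size $O(n\cdot|\AP|)$; hence $|\varphi_{\mathit{outcond}}| = \mathrm{poly}(n)$ and $|\varphi''| = O(n\,|\varphi| + \mathrm{poly}(n))$. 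Thus Steps~1 and~2 are polynomial, and Steps~4 and~5 are cheap post-processing: by Lemma~\ref{lem:symmetryLemma2}, once $\varphi_{\mathit{outcond}}$ holds along all paths the symmetric completion is well-defined and regular, and it is computed by a relabeling of the synthesized FSM in time polynomial in its size, while the final projection onto the first process is trivial. The dominating cost is therefore Step~3.

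The central point is that I would \emph{not} feed $\varphi''$ into a black-box 2EXPTIME synthesizer: because $|\varphi''|$ is linear in $n$, this would make the running time doubly exponential in $n$. Instead I would exploit the conjunctive structure and determinize each conjunct individually before composing. For each rotation $\rot(\varphi,k)$ I build a deterministic parity automaton; since $\rot(\varphi,k)$ has the same length and the same number of distinct propositions as $\varphi$, this automaton has $2^{2^{O(|\varphi|)}}$ states, a bound that is \emph{independent} of $n$. Taking the product of the $n$ such automata yields $\bigl(2^{2^{O(|\varphi|)}}\bigr)^{n} = 2^{\,n\cdot 2^{O(|\varphi|)}}$ states, so $n$ enters the exponent only once. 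Each conjunct of $\varphi_{\mathit{outcond}}$ is a safety property of the form $\neg(\alpha\,\LTLU\,\neg\beta)$ with a constant-size deterministic automaton, so the product over the $\le n$ divisors contributes only $2^{O(n)}$ states. Altogether the deterministic automaton for $\varphi''$ has $2^{\,n\cdot 2^{O(|\varphi|)}}$ states, which is doubly exponential in $|\varphi|$ but only singly exponential in $n$.

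It then remains to bound the game-solving step of Step~3. The synthesis game (equivalently, the emptiness check of the associated tree automaton) is solved in time polynomial in the number of states and exponential in the number of priorities/acceptance pairs of the composed condition; the latter is $n\cdot 2^{O(|\varphi|)}$, so the whole step stays within $2^{\,\mathrm{poly}(n)\cdot 2^{O(|\varphi|)}}$. Reading this bound with $n$ fixed gives $2^{2^{O(|\varphi|)}}$, i.e.\ 2EXPTIME in the specification length, and with $|\varphi|$ fixed it gives $2^{\mathrm{poly}(n)}$, i.e.\ EXPTIME in the number of processes, which is exactly the claim. (The input/output alphabet has $2^{O(n|\AP|)}$ letters, single-exponential in $n$, which is subsumed by this bound.)

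I expect the only genuine subtlety to be the \textbf{determinize-then-product argument}: the naive route of applying the 2EXPTIME LTL-synthesis bound to $\varphi''$ yields a doubly exponential dependence on $n$, and improving this to single exponential hinges precisely on the observation that the $n$ rotations all share the length and proposition count of $\varphi$, so that each determinization is $n$-independent and $n$ appears only as the number of product factors rather than inside the doubly exponential tower. Everything else — the size estimate for $\varphi''$, the safety character of $\varphi_{\mathit{outcond}}$, and the polynomial cost of the completion and projection — is routine.
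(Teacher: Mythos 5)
Your proposal is correct, and it reaches the right bound, but it takes a genuinely different automata-theoretic route from the paper. Both arguments share the same key observation — that feeding $\varphi''$ (whose length is linear in $n$) into a black-box 2EXPTIME synthesizer would make the running time doubly exponential in $n$, so the conjunctive structure must be exploited — but they exploit it differently. The paper translates $\varphi$ once into a universal co-B\"uchi word automaton of size $2^{O(|\varphi|)}$ and uses the fact that universal automata are closed under conjunction \emph{without blowup}, so the automaton for $\varphi''$ has only $n\cdot 2^{O(|\varphi|)}$ states; it then invokes bounded synthesis, which is exponential in that state count, giving $2^{O(n\cdot 2^{O(|\varphi|)})}$ directly and avoiding determinization altogether. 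You instead determinize each conjunct $\rot(\varphi,k)$ separately into a parity automaton of $n$-independent size $2^{2^{O(|\varphi|)}}$ and take the product of the $n$ factors, landing at $2^{n\cdot 2^{O(|\varphi|)}}$ states before game solving. That works, with one caveat you gloss over: the conjunction of $n$ parity conditions is not a parity condition but (essentially) a Streett condition with $n\cdot 2^{O(|\varphi|)}$ pairs, and Streett games are solved in time roughly $(\text{states})^{\text{pairs}}$ rather than ``polynomial in states and exponential in pairs'' as you state; fortunately $\bigl(2^{n\cdot 2^{O(|\varphi|)}}\bigr)^{n\cdot 2^{O(|\varphi|)}}$ is still $2^{\mathrm{poly}(n)\cdot 2^{O(|\varphi|)}}$, so your final bound survives. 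The paper's route buys a cleaner argument (no Safra determinization, no acceptance-condition bookkeeping, and a connection to a practical synthesis procedure); yours buys independence from the bounded-synthesis machinery and makes the ``$n$ enters the exponent only once'' phenomenon more explicit. Your accounting of Steps 1, 2, 4, and 5 matches the paper's.
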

\begin{proof}
We use the automata-theoretic approach to reactive system synthesis from \cite{DBLP:conf/focs/KupfermanV05,DBLP:conf/atva/ScheweF07a} and the concepts defined in these works.
We start by translating the specification to a universal co-B\"uchi word (UCW) automaton, which is of size $2^{O(|\varphi|)}$ in the size of the specification. 
As UCWs do not blow up under conjunction, executing step 1 from the construction above leads to an automaton of size $n \cdot 2^{O(|\varphi|)}$. A deterministic automaton for the added property in step 2 can be built with at most $n$ states, so executing step 2 leads to at most $n$ additional states, and we obtain an automaton with $n+n \cdot 2^{O(|\varphi|)} = n \cdot 2^{O(|\varphi|)}$ many states. The \emph{bounded synthesis} approach works with specifications given as co-Büchi word automata \cite{DBLP:conf/atva/ScheweF07a} and takes time exponential in the number of states of the automaton. The overall time complexity so far is thus 2EXPTIME in $|\varphi|$ and exponential in $n$. Step 4 leads to a blow-up of at most a factor of $n^2$ and can be done in time polynomial in the number of states in the synthesized finite-state machine (whose size is proportional to the time complexity of the synthesis procedure executed in the previous step). Step 5 is simple and takes time linear in the size of the FSM.
\end{proof}
Note that even though the construction above discards all non-normalized parts of the synthesized computation tree, asking the synthesis algorithm to nevertheless synthesize these parts according to the specification comes without loss of generality, as trees with the symmetry property (which we are actually searching for) fulfill $\varphi''$ along all paths if all of their paths satisfy $\varphi$. So the synthesis process does not report spurious unrealizability.

\section{Rotation-Symmetric Synthesis -- Complexity}
\label{sec:complexity}

  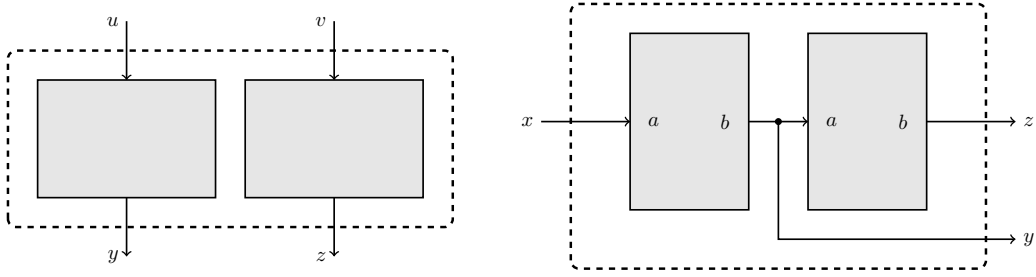
\begin{figure}
    \centering
\scalebox{0.78}{
    \begin{tikzpicture}
    \draw[very thick, dashed, rounded corners] (0,0) rectangle (7.5,3);
    \draw[thick,fill=black!10!white] (0.5,0.5) rectangle +(3,2); \draw[thick,fill=black!10!white] (4,0.5) rectangle +(3,2);
    \draw[thick,->] (2,3.5) -- node[at start,left] {$u$} (2,2.5); \draw[thick,->] (5.5,3.5) -- node[at start,left] {$v$} (5.5,2.5);
    \draw[thick,->] (2,0.5) -- node[at end,left] {$y$} (2,-0.5); \draw[thick,->] (5.5,0.5) -- node[at end,left] {$z$} (5.5,-0.5);
   \end{tikzpicture}}\qquad\scalebox{0.78}{
    \begin{tikzpicture}
    \draw[very thick, dashed, rounded corners] (0,0) rectangle (7,4.5);
    \draw[thick,fill=black!10!white] (1,1) rectangle +(2,3); \draw[thick,fill=black!10!white] (4,1) rectangle +(2,3);
    \draw[thick,->] (-0.5,2.5) -- node[at start,left] {$x$} (1,2.5);
    \draw[thick,->] (3,2.5) -- (4,2.5); \draw[fill] (3.5,2.5) circle (0.05cm); \draw[thick,->] (3.5,2.5) -- (3.5,0.5) -- node[at end,right] {$y$} (7.5,0.5);
    \draw[thick,->] (6,2.5) -- node[at end,right] {$z$} (7.5,2.5);
    \node at (1.4,2.5) {$a$}; \node at (2.6,2.5) {$b$}; \node at (4.4,2.5) {$a$}; \node at (5.6,2.5) {$b$};
   \end{tikzpicture}}
   \caption{System architectures with undecidable synthesis problems. On the left: architecture A0, as defined by Pnueli and Rosner~\cite{DBLP:conf/focs/PnueliR90}; on the right: the symmetric architecture S0. The distributed synthesis problem of A0 and the symmetric synthesis problem of S0 are undecidable.}
   \label{fig:S0Architecture}
   \label{fig:A0Architecture}
  \end{figure}

The symmetric synthesis construction from the previous section has a time complexity that is doubly-exponential in the length of the specification and singly-exponential in the number of processes. We want to show in this section that this matches the complexity of the problem by giving a corresponding hardness result.
The 2EXPTIME-hardness in the specification length is inherited from the complexity of LTL synthesis \cite{DBLP:conf/icalp/PnueliR89}. For the EXPTIME complexity in the number of processes, we provide the following result:

  \begin{lemma}
   \label{lem:symmetricSynthesisHardness}
   Given an
   $f(k)$-space bounded alternating Turing machine $M=(Q,\Sigma,\Gamma, \allowbreak \delta, \allowbreak q_0, \allowbreak g)$, 
   we can reduce the acceptance of a word
   $w \in
   \Sigma^k$ by
   $M$ to the symmetric realizability problem of
   $n =
   f(k)$ processes with a specification in LTL of size polynomial in
   $|Q| \cdot |\Gamma| \cdot |w|$.
  \end{lemma}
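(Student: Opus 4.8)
The plan is to reduce from the acceptance problem of space-bounded alternating Turing machines, exploiting that $\text{ASPACE}(f(k)) = \text{DTIME}(2^{O(f(k))})$ (Chandra--Kozen--Stockmeyer): a reduction turning an $f(k)$-space bounded ATM into a symmetric synthesis instance with $n = f(k)$ processes and a specification of size polynomial in $|Q|\cdot|\Gamma|\cdot|w|$ (in particular \emph{independent of $n$}) immediately yields EXPTIME-hardness measured in the number of processes, matching the upper bound of the previous section. I would dedicate one process $p_i$ to tape cell $i$, so that a full configuration of $M$ (the $f(k)=n$ tape symbols from $\Gamma$, together with the head position and the state from $Q$) is spread across the $n$ output slots, slot $i$ carrying the content of cell $i$. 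One step of the trace corresponds to one step of $M$, so the sequence of global outputs along a trace spells out a computation of $M$.

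The difficulty is that a single process produces only its own slot and, in a rotation-symmetric architecture (Definition~\ref{def:rotationSymmetricArchitecture}), neither sees the other processes' outputs nor its own previous output through any signal. I would therefore route the current configuration back through the environment: the global input at each step carries a configuration, and a small LTL constraint forces it to equal the previous output slot-wise. Each process then reads the \emph{entire} previous configuration (rotated by its index) and computes the next content of its cell from a local window. The environment additionally supplies (i) the universal branching choice, placed in the slot currently holding the head, and (ii) at time $0$ an asymmetric marker designating the initial head cell. The crucial point for keeping $|\varphi|$ independent of $n$ is that I only ever write the transition-consistency constraint in the ``frame of cell $0$'': a bounded formula over the slots $\{n{-}1,0,1\}$ stating that slot $0$ of the output is the correct successor of slots $n{-}1,0,1$ of the input (reading the choice bit co-located with the head in this window, and treating cell $0$ as head / neighbour-of-head / far cell in the obvious sub-cases). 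Since a symmetric implementation induces a tree with the symmetry property (Lemma~\ref{SymmetryLemma}) and the constraint is required along \emph{all} traces, it is in particular required along every rotated input history $\rot(t,-i)$; transporting it back through $\tau(\rot(t,-i))=\rot(\tau(t),-i)$ shows that cell $i$ is updated correctly on $t$ for every $i$. Thus a single constant-size per-cell constraint is lifted by symmetry to the whole tape, which is what rescues the polynomial bound on the specification.

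For correctness I would match the two players: existential states of $M$ are resolved by the system through a designated output bit, universal states by the environment through its choice bit, and the realizability requirement ``the system wins against all inputs'' mirrors ``all universal branches accept.'' Acceptance itself is the reachability condition $\LTLF\,\mathit{accept}$ demanded along the single path selected by the choices, which correctly counts non-accepting (in particular looping) branches as failures. The specification is taken as an implication whose antecedent collects the environment's obligations (faithful feedback, a single well-formed head marker, legal choice bits), so that malformed environment behaviour satisfies $\varphi$ vacuously and cannot cause spurious unrealizability; this makes the equivalence ``symmetric realizability $\Leftrightarrow$ $M$ accepts $w$'' go through in both directions, since a genuine accepting run tree of $M$ yields a uniform per-cell update rule, hence a symmetric implementation, and conversely.

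The step I expect to be the main obstacle is the \emph{initial configuration}. The canonical start configuration is asymmetric (a unique head at cell $0$, the input $w$ in cells $0,\dots,k{-}1$, blanks elsewhere), but a symmetric system cannot spontaneously produce asymmetry: by the reasoning behind Lemma~\ref{lem:nonRegular}, outputting a configuration with a single distinguished cell at the root is impossible unless the asymmetry is already present in the input (formally, $\reps$ of the input must divide $\reps$ of the output). The resolution, which I would work out carefully, is to delegate all symmetry breaking to the environment: the environment plants the unique head marker, the system lays out $w$ (writable with a constraint of size $O(k)$ over slots $0,\dots,k{-}1$, relative to the marker) and leaves all other cells blank, and ``exactly one marker'' is enforced not by an $n$-dependent counting formula but by making an ill-formed marker fall into the vacuous branch of the implication while a well-formed one yields a faithful simulation of $M$ on $w$ up to a rotation of the ring, under which acceptance is invariant. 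Getting this boundary bookkeeping to interact correctly with the rotation symmetry, without ever referring to more than a bounded number of slots, is the delicate heart of the construction.
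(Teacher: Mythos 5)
Your core strategy coincides with the paper's: one process per tape cell with $n=f(k)$, a transition-consistency constraint written only in the frame of cell $0$ over a bounded window and lifted to every cell by the rotation symmetry of the architecture (this is indeed the idea that keeps $|\varphi|$ polynomial in $|Q|\cdot|\Gamma|\cdot|w|$ and independent of $n$), the environment planting the initial head marker to supply the asymmetry that a symmetric system cannot create on its own, and the alternation of $M$ resolved as a game between system and environment. One preliminary remark: the detour of feeding the configuration back through the environment is unnecessary, because in a rotation-symmetric architecture every process already reads the \emph{entire} global input (rotated by its index); the marker and the branching choices determine the whole computation, so each process can maintain the full configuration internally and merely output its own cell. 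Your feedback loop is not wrong in principle, but it forces an echo assumption into the antecedent and raises the additional question of how the neighbours of the head learn the system's existential choice, which is an output they cannot observe.

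The genuine gaps sit exactly where the paper inverts your choices. First, you reduce acceptance to realizability with the liveness goal $\LTLF\,\mathit{accept}$ and place ``a single well-formed head marker'' among the environment's obligations; but ``exactly one marker'' is not expressible by an $n$-independent local formula (a window of width $O(k)$ around cell $0$, even after taking all rotations, only excludes two markers at distance less than $k$), and if the condition is dropped the environment can sabotage realizability by planting zero markers, or two far-apart markers that starve the resulting simulations of space, so the specification becomes unrealizable even when $M$ accepts $w$. Second, you never treat tape boundaries: on a ring of $f(k)$ cells the head silently wraps around, so a system strategy could reach $\mathit{accept}$ via runs that exceed the space bound, breaking the direction ``realizable $\Rightarrow$ $M$ accepts.'' The paper resolves both issues with one move: it makes the specification realizable if and only if $M$ does \emph{not} accept $w$, uses the safety goal $\LTLG\,\neg\mathit{accept}$, swaps the roles you chose (the system resolves universal branching, the environment resolves existential branching via an input bit), and equips each simulation with explicit end markers $\lfloor,\rfloor$ whose collision --- space exhaustion, or interference between several environment-started simulations --- simply freezes the tape. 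Under a safety objective, extra markers, frozen simulations, diverging runs and space overflows all work in the system's favour, so none of them needs to be detected or excluded by an $n$-dependent formula. You honestly flag the marker and boundary bookkeeping as ``the delicate heart of the construction,'' but as it stands that heart is missing, and with your orientation of the reduction it is not clear it can be supplied within the required size bound.
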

\begin{proof}
We build a specification that requires the processes to output the Turing tape configuration along an execution of the machine. The specification is realizable if and only if the Turing machine does \textbf{not} accept the word.
Every process outputs the value of one Turing tape cell and if the tape head is at the cell, also the state of the Turing machine. There are $n$ input signals to the architecture, and when the processes start, the left-most local input signals of the processes is used to tell one or more processes that the Turing tape computation should start at that cell with the tape head being initially there (with $w$ as the initial tape content). To account for the rotation-symmetry, the processes output not only the tape content and tape head position, but also the current boundaries of the tape. The specification is modeled such that if start and end markers collide, the simulation of the Turing machine can stop.

The specification also includes conjuncts that require all processes together to simulate the Turing machine computation correctly and to never reach an accepting state. Whenever the alternating Turing machine branches universally, the left-most local process input signal is used to select which successor state is picked. In case of existential branching, the processes can decide which successor state to pick. 
Enforcing the specification to be realizable if and only if the word $w$ is \textbf{not} accepted by the Turing machine helps with taking care of the diverging computations of the Turing machine and those computations that exceed the space bound. Both count as non-accepting in the definition of space-bounded Turing machines. Since these runs never visit accepting states and/or permit the simulation to stop, they are allowed to be simulated by a synthesized implementation.

The specification can be written with size polynomial in $|Q| \cdot |\Gamma| \cdot |w|$ as we only need to define the specification for one process. By the symmetry of the architecture, the other processes have to fulfill it as well.
\end{proof}
A more detailed proof can be found in the appendix.

  \begin{corollary}
   The rotation-symmetric realizability problem (for LTL) has a time complexity that is exponential in the number of processes.
  \end{corollary}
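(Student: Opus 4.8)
The plan is to read the corollary as a tightness statement and prove it by pairing the upper bound from the Proposition with a matching lower bound extracted from Lemma~\ref{lem:symmetricSynthesisHardness}. The Proposition already gives that the synthesis construction runs in time single-exponential in the number of processes $n$ once the specification size is treated as bounded, so the only work left is to establish a lower bound of the form $2^{\Omega(n)}$.

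For the lower bound I would turn the reduction of Lemma~\ref{lem:symmetricSynthesisHardness} into a concrete deterministic-time lower bound via the Chandra--Kozen--Stockmeyer characterization of alternating space~\cite{DBLP:journals/jacm/ChandraKS81}: for space bounds $s(k) \ge \log k$ one has $\mathrm{ASPACE}(s(k)) = \bigcup_c \mathrm{DTIME}(2^{c\,s(k)})$. Choosing the linear space bound $f(k) = k$ makes the class of languages decidable by $k$-space bounded alternating Turing machines equal to $\mathrm{DTIME}(2^{O(k)})$, while the lemma maps an instance (a machine $M$ and a length-$k$ word $w$) to a symmetric realizability question over $n = f(k) = k$ processes using a specification of size polynomial in $k$, all in polynomial time.

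I would finish with a standard diagonalization argument. By the time hierarchy theorem there is a language $L$ that is decidable in time $2^{O(k)}$ but not in time $2^{o(k)}$, and by the characterization above $L$ is decided by some linear-space alternating Turing machine. Composing the polynomial-time reduction of the lemma with a hypothetical $2^{o(n)}$-time decision procedure for rotation-symmetric realizability would then decide $L$ on length-$k$ inputs in time $2^{o(k)}\cdot\mathrm{poly}(k)$, contradicting the choice of $L$. Hence rotation-symmetric realizability needs time $2^{\Omega(n)}$, and together with the Proposition's upper bound the complexity is $2^{\Theta(n)}$, i.e.\ exponential in the number of processes (equivalently, the problem is \textsc{EXPTIME}-complete, since $\mathrm{APSPACE} = \textsc{EXPTIME}$).

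The main obstacle is the accounting in this last step: one has to be sure that the specification length and the reduction overhead remain polynomial in $k$ so that they are swallowed by the $2^{o(n)}$ factor rather than dominating it, and that the space bound is chosen so that the number of processes $n = f(k)$ scales linearly with the input length, which is exactly what makes the deterministic-time exponent translate into a matching exponent in $n$. The algebraic heart of the lemma --- that a single process specification of polynomial size forces all processes to jointly and faithfully simulate the alternating machine, with universal branching resolved by the left-most local input and existential branching resolved by the processes' own outputs --- is assumed from Lemma~\ref{lem:symmetricSynthesisHardness}; only its complexity-theoretic consequences are needed here.
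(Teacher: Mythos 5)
Your overall strategy --- combine the upper bound of the Proposition with a lower bound obtained by feeding the Chandra--Kozen--Stockmeyer characterization of alternating space and a hierarchy theorem into Lemma~\ref{lem:symmetricSynthesisHardness} --- is exactly the paper's strategy. However, your specific instantiation with the linear space bound $f(k)=k$ leaves a genuine gap. The reduction of Lemma~\ref{lem:symmetricSynthesisHardness} produces an instance with $n=k$ processes \emph{and} a specification of size polynomial in $k$, so on the constructed instances the two parameters $n$ and $|\varphi|$ are polynomially related. A hypothetical procedure that is ``$2^{o(n)}$ in the number of processes'' may still have arbitrary (in particular, doubly exponential) dependence on $|\varphi|$ --- indeed the known upper bound is 2EXPTIME in $|\varphi|$, and LTL realizability is already 2EXPTIME-hard in $|\varphi|$ for a single process --- so composing it with your reduction yields a running time of $g(\mathrm{poly}(k))\cdot 2^{o(k)}$ for an uncontrolled $g$, which is not $2^{o(k)}$ and therefore does not contradict $L\notin\mathrm{DTIME}(2^{o(k)})$. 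Your argument only rules out algorithms whose dependence on the specification is itself subexponential, which is a much weaker (and essentially vacuous) statement; it does not show that the dependence on $n$ must be exponential.

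The paper closes this gap by not stopping at linear space: for every $c\in\NN$ it reduces a $(c{+}1)$-EXPTIME-complete problem (equivalently, a $(c)$-AEXPSPACE problem) to symmetric realizability with a specification still of size polynomial in $k$ but with $n$ being $c$-fold exponential in $k$. Making $n$ tower-many levels larger than $|\varphi|$ ensures that, for any fixed algorithm whose dependence on the specification is bounded by some $d$-fold exponential, choosing $c>d$ makes the $n$-dependence the dominant term; a sub-exponential (e.g., polynomial) dependence on $n$ would then place a $(c{+}1)$-EXPTIME-complete problem in $(c)$-EXPTIME, contradicting the strictness of the exponential-time hierarchy. To repair your proof you would either have to add the (unjustified) assumption that the algorithm is subexponential in $|\varphi|$, or redo the accounting with $f(k)$ growing much faster than any polynomial in $k$, which is precisely the role of the parameter $c$ in the paper's argument.
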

  \begin{proof}
   Given the question whether a word
   $w = w_0 \ldots
   w_{k-1}$ is in the language defined by some
   $(c+1)$-EXPTIME
   $=$ $(c)$-AEXPSPACE problem for some
   $c \in
   \NN$, we can reduce it to the symmetric realizability problem for an LTL specification of length polynomial in
   $k$ and with a number of processes that is 
   $(c)$-exponential in
   $k$.
   Since by the space hierarchy theorem \cite{Ranjan1991289}, the
   $(c)$-EXPTIME hierarchy is strict for increasing
   $c$, we can conclude that in general, we cannot solve the symmetric realizability problem faster than in time exponential in the number of components.
  \end{proof}

\section{The General Case -- Undecidability}
\label{sec:undecidability}

  The synthesis problem for standard, not necessarily symmetric, distributed systems is decidable as long as the processes can be ordered with respect to their relative knowledge about the system inputs~\cite{DBLP:conf/lics/FinkbeinerS05}.
  The problem becomes undecidable as soon as it contains an information fork, i.e., a pair of processes with incomparable knowledge. The simplest such architecture is Pnueli and Rosner's A0 architecture~\cite{DBLP:conf/focs/PnueliR90}, shown on the left in Fig.~\ref{fig:S0Architecture}.
  In this section, we show that for symmetric synthesis, even architectures without information forks, such as the S0 architecture shown on the right in Fig.~\ref{fig:S0Architecture}, are undecidable.
  Our proof is based on Pnueli and Rosner's undecidability argument for A0:
  
\begin{lemma}[\cite{DBLP:conf/focs/PnueliR90}]
  \label{lemma:pnueliroesner}
  For a given Turing machine $M$, there exists an LTL formula $\psi$ that is realizable in the distributed architecture A0 if and only if $M$ halts and such that the two processes of the unique implementation of $M$ sequentially output binary encodings of the configurations of the Turing machine on $y$ (or $z$, respectively) upon the first $\TRUE$ value on the input $u$ (or $v$, respectively).
\end{lemma}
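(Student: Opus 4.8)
The plan is to reduce the halting problem for $M$ to the realizability of a suitable LTL formula $\psi$ over the four signals $\{u,v,y,z\}$ of the A0 architecture, exploiting the fact that the two processes are incomparably informed: the $y$-process sees only $u$, the $z$-process sees only $v$, and neither can observe the other's input. The central device is to force both processes to emit one and the same fixed output sequence relative to the first $\TRUE$ on their trigger input, and to make that sequence be the computation of $M$. Concretely, I would build $\psi$ as a conjunction of the following kinds of requirements, each expressible in LTL over all four signals.

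First, a formatting and initialisation part: as long as the local trigger ($u$ for the $y$-process, $v$ for the $z$-process) is $\FALSE$, the corresponding output is held at a default value; from the first $\TRUE$ on the trigger onwards, the output must be a well-formed sequence of binary-encoded configurations separated by a delimiter, whose first block is the initial configuration of $M$ on the empty input with the head at the left end. Second, a halting part: the output must eventually contain a block encoding a halting configuration (after which it may freeze or loop). Third, and crucially, a copying part $\LTLG(u \leftrightarrow v) \rightarrow \LTLG(y \leftrightarrow z)$. Because the $y$-process cannot see $v$ and the $z$-process cannot see $u$, the only way to guarantee $y \equiv z$ on every input with $u \equiv v$ is for each process to compute the very same function of its own trigger history; combined with the formatting part this forces both processes to output the identical fixed sequence $\sigma$, shifted to start at their respective first trigger. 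I would establish this common-function property by the standard diagonal argument: instantiating $v$ to an arbitrary copy of $u$ (and vice versa) pins the two output functions to be equal on every prefix.

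Once the common-function property holds, an environment that triggers $u$ at time $a$ and $v$ at time $b$ makes $z$ exactly the $(b-a)$-shift of $y$. This is the mechanism that a single output stream cannot provide and that the information fork supplies for free: it lets a fixed, bounded-$\LTLX$ LTL constraint comparing $y$ with $z$ act as a comparison of $\sigma$ with an environment-chosen shift of itself. I would use this to enforce the Turing-machine transition relation, which is a local three-cell window rule, between corresponding cells of consecutive configurations: the constraint is made conditional on locally detectable delimiter alignment, so that whenever the chosen shift aligns the boundaries of two adjacent configuration blocks the cell-by-cell successor rule of $M$ is enforced, and whenever it does not the constraint is vacuous.

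The main obstacle is precisely this alignment step, because configurations grow in length, so no single shift keeps all configuration boundaries aligned simultaneously. The resolution, which I would carry out carefully, is an ``error-exposing'' argument quantifying over environment choices: if the emitted sequence were not a faithful run of $M$, then some adjacent pair of blocks would violate the window rule, and the environment could choose the shift equal to that block's length, thereby aligning exactly that pair and triggering a local violation of $\psi$; conversely, for a faithful run every shift either aligns a correct pair (rule satisfied) or misaligns (constraint vacuous), so $\psi$ holds on all inputs. Putting the parts together gives the equivalence: if $M$ halts, the unique faithful run reaches a halting configuration, and emitting it yields an implementation satisfying every conjunct, so $\psi$ is realizable; if $M$ does not halt, any implementation must either deviate from the transition relation (caught by some shift) or never reach a halting configuration (violating the halting part), so $\psi$ is unrealizable. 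The formatting, initialisation and transition conjuncts moreover determine the output uniquely as the configuration sequence of $M$, which gives the final clause of the statement. The full details of this construction are due to Pnueli and Rosner.
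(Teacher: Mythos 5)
First, a remark on scope: the paper itself gives no proof of this lemma --- it is imported verbatim from Pnueli and Rosner \cite{DBLP:conf/focs/PnueliR90} and used as a black box in the undecidability section --- so the comparison here is against the cited construction rather than an in-paper argument. Your reconstruction does capture the correct high-level mechanism: both processes must realize the same trigger-relative output sequence, and the environment's freedom to offset the two triggers turns a cross-comparison of $y$ and $z$ into a self-comparison of that sequence at an arbitrary shift. However, two steps do not hold up as written.

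The first gap is the jump from the conjunct $\LTLG(u \leftrightarrow v) \rightarrow \LTLG(y \leftrightarrow z)$ to ``triggering $u$ at time $a$ and $v$ at time $b$ makes $z$ the $(b-a)$-shift of $y$.'' That conjunct forces the two processes to implement the \emph{same} function of their local histories, but not a \emph{time-invariant} one: the common function may legally emit different (well-formed, correctly initialized) block sequences depending on the absolute time at which its trigger arrives, in which case the offset inputs compare two unrelated sequences and the shifting argument collapses. You need an extra conjunct relating consecutive trigger times (e.g., if $v$ rises exactly one step after $u$, then $\LTLG(y \leftrightarrow \LTLX z)$ from then on), which by induction over the trigger time pins down a single sequence $\sigma$. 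The second, more serious gap is the activation condition of the successor check. ``Locally detectable delimiter alignment'' is not yet an LTL formula, and the obvious candidate --- both streams emit a block delimiter simultaneously --- fails in the direction you claim is unproblematic: since configuration lengths grow, the environment can pick the offset $d = |{\#}C_0{\#}C_1|$, so that at the very instant of $z$'s trigger the stream $z$ starts block $C_0$ while $y$ starts block $C_2$; the delimiters align, the window rule fires between \emph{non-adjacent} configurations, and the faithful implementation of a halting machine violates $\psi$, breaking the ``$M$ halts $\Rightarrow \psi$ realizable'' direction. No tag of bounded size (parity bits, indices modulo a constant) can certify that the two simultaneously starting blocks have index difference exactly one. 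Closing this requires a genuine additional device from the Pnueli--Rosner construction --- an encoding of the run in which adjacency of the compared blocks is locally certifiable and the check is anchored to the trigger instants --- and this is exactly the step your sketch identifies as ``the main obstacle'' without actually resolving it.
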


  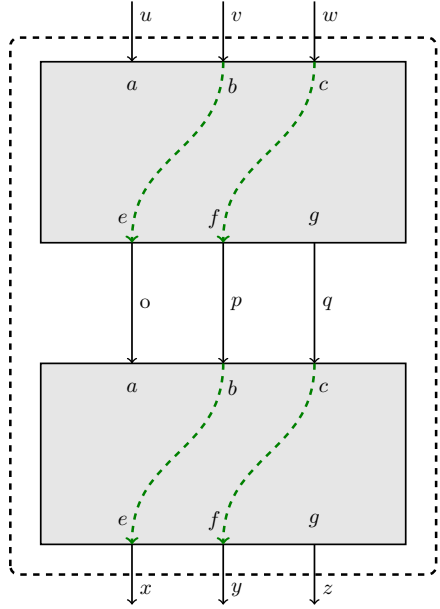
\begin{figure}
   \centering%
   \scalebox{0.8}{\begin{tikzpicture}
    \draw[very thick, dashed, rounded corners] (-0.5,-5.5) rectangle (6.5,3.4);
    \draw[thick,fill=black!10!white] (0,0) rectangle +(6,3); \draw[thick,fill=black!10!white] (0,-5) rectangle +(6,3);
    
    	\draw[dashed,very thick,->,color=green!50!black] (3,3) .. controls +(0,-1.5) and +(0,1.5) .. (1.5,0);
   		\draw[dashed,very thick,->,color=green!50!black] (4.5,3) .. controls +(0,-1.5) and +(0,1.5) .. (3,0);
   		\draw[dashed,very thick,->,color=green!50!black] (3,-2) .. controls +(0,-1.5) and +(0,1.5) .. (1.5,-5);
   		\draw[dashed,very thick,->,color=green!50!black] (4.5,-2) .. controls +(0,-1.5) and +(0,1.5) .. (3,-5);
   		
    \draw[thick,->] (1.5,4) -- node[near start,right] {$u$} (1.5,3); \draw[thick,->] (3,4) -- node[near start,right] {$v$} (3,3); \draw[thick,->] (4.5,4) -- node[near start,right] {$w$} (4.5,3);
    \draw[thick,->] (1.5,-5) -- node[near end,right] {$x$} (1.5,-6); \draw[thick,->] (3,-5) -- node[near end,right] {$y$} (3,-6); \draw[thick,->] (4.5,-5) -- node[near end,right] {$z$} (4.5,-6);
    \draw[thick,->] (1.5,0) -- node[right] {o} (1.5,-2); \draw[thick,->] (3,0) -- node[right] {$p$} (3,-2); \draw[thick,->] (4.5,0) -- node[right] {$q$} (4.5,-2);
    \node at (1.5,2.6) {$a$}; \node at (3.15,2.6) {$b$}; \node at (4.65,2.6) {$c$}; \node at (1.35,0.4) {$e$}; \node at (2.85,0.4) {$f$}; \node at (4.5,0.4) {$g$};
    \node at (1.5,-2.4) {$a$}; \node at (3.15,-2.4) {$b$}; \node at (4.65,-2.4) {$c$}; \node at (1.35,-4.6) {$e$}; \node at (2.85,-4.6) {$f$}; \node at (4.5,-4.6) {$g$};
    
   \end{tikzpicture}}
   \caption{Symmetric architecture S2. The symmetric synthesis problem for S2 is undecidable. The dashed arrows in the process boxes show how the specification given in the proof of Lemma~\ref{lem:undecidabilityOfS2} requires the processes to forward the local input streams. }
   \label{fig:S2Architecture}
  \end{figure}

  Because of the undecidability of the halting problem, Lemma~\ref{lemma:pnueliroesner} means that the distributed synthesis problem of architecture A0 is undecidable.
We prove the undecidability of the symmetric synthesis problem of architecture S0 in two steps. First, we establish the undecidability of the larger architecture S2, depicted in Figure~\ref{fig:S2Architecture}, by showing that the realizability of $\psi$ in A0 can be reduced to the symmetric realizability of an LTL formula over S2; in the second step, we encode the synthesis problem of S0 into the synthesis problem of S2 and thus establish that the synthesis problem for the simpler architecture S0 is undecidable as well.

\begin{lemma}
\label{lem:undecidabilityOfS2}
  The symmetric synthesis problem for architecture S2 is undecidable.
\end{lemma}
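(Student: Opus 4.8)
The plan is to reduce the realizability of the Pnueli--Rosner formula $\psi$ in the architecture A0 (Lemma~\ref{lemma:pnueliroesner}) to the symmetric realizability of a suitable LTL formula $\psi'$ in S2. Since, by Lemma~\ref{lemma:pnueliroesner} together with the undecidability of the halting problem, deciding A0-realizability of $\psi$ is undecidable, such a reduction establishes the claim. The point to keep in mind throughout is that S2 is a \emph{pipeline}: the inputs of the lower process are driven by the outputs of the upper process, so the two processes are comparably informed, S2 contains no information fork, and its \emph{standard} distributed synthesis problem is decidable. Hence the undecidability cannot come from the architecture's information structure; it must be produced entirely by the requirement that \emph{both} processes run the \emph{same} implementation $\mathcal{F}$.

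First I would fix the wiring read off from Figure~\ref{fig:S2Architecture} and write $\psi'$ as a conjunction of three parts. (i) \emph{Forwarding conjuncts} (the dashed arrows), of the form $\LTLG(e \leftrightarrow b)$ and $\LTLG(f \leftrightarrow c)$ (up to the one-step delay imposed by the synchronous composition), which force each process to pass its local inputs down the pipeline; composed along the two copies of $\mathcal{F}$, these route the global input $u$ to the upper copy and a forwarded copy of the global input $v$ to the lower copy, while exposing the two ``answer'' signals through the overall outputs $y$ and $z$. (ii) A \emph{translated copy of $\psi$} that imposes the A0 specification on these routed signals, identifying the A0 trigger/answer pair of process~1 with $u$ and the upper copy's answer, and that of process~2 with the forwarded $v$ and the lower copy's answer. (iii) Bookkeeping conjuncts that make the triggers visible at the intended positions despite the forwarding delays. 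For the completeness direction (A0-realizable $\Rightarrow$ $\psi'$ symmetrically realizable) I would invoke the fact from Lemma~\ref{lemma:pnueliroesner} that the \emph{unique} A0 implementation has both processes run the identical strategy ``wait for the first $\TRUE$ on the own trigger input, then emit the binary encodings of the configurations of $M$''. This strategy is already symmetric, so it can be packaged as a single $\mathcal{F}$ that also performs the forwarding; plugging this $\mathcal{F}$ into both positions of S2 then satisfies $\psi'$.

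The soundness direction is where the real work lies: from a symmetric $\mathcal{F}$ satisfying $\psi'$ in S2 I must extract an A0 implementation of $\psi$, i.e.\ two strategies in which process~2 does \emph{not} depend on $u$ and process~1 does not depend on $v$. The danger, and the step I expect to be the main obstacle, is that because the lower copy's input is the upper copy's output, the shared $\mathcal{F}$ could \emph{leak} the hidden trigger of $u$ down the pipeline through the free output that carries the upper copy's answer, so that the lower copy learns the trigger time of $u$ and the two answers trivially agree even when $M$ diverges. The plan is to block this leak by the design of $\psi'$: the only output the upper copy may produce beyond the forced forwarding is its answer, which $\psi'$ pins to the fixed configuration sequence of $M$ and whose sole degree of freedom is the position selected by $u$'s trigger. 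I would then give a \emph{scenario-swapping} (indistinguishability) argument: two global input sequences that agree on $v$ and $w$ but differ in $u$ induce \emph{identical} local input histories for the lower copy up to the moment at which $\psi'$ forces it to commit its own answer, so $\mathcal{F}$ must return the same lower-copy answer in both. This is exactly the incomparable-information condition required of A0 process~2, and the symmetric statement yields the condition for process~1; reading $\mathcal{F}$'s behavior off in these two roles produces the desired A0 implementation.

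Combining the two directions, $\psi'$ is symmetrically realizable in S2 if and only if $\psi$ is realizable in A0, i.e.\ if and only if $M$ halts, so the symmetric synthesis problem for S2 is undecidable. The reduction is effective, since $\psi'$ is computable from $M$ and all added conjuncts have size polynomial in the description of S2.
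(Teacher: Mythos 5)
Your reduction is set up the same way as the paper's: the same specification $\psi' = \psi_d \wedge \LTLG(v \leftrightarrow \LTLX o) \wedge \LTLG(w \leftrightarrow \LTLX p)$ (forwarding conjuncts plus a delayed copy of the Pnueli--Rosner formula), and your completeness direction (package the unique, already-symmetric A0 implementation together with the forwarding into one $\mathcal{F}$) is exactly the paper's. The gap is in your soundness direction. Your scenario-swapping claim --- that two global inputs agreeing on $v,w$ but differing in $u$ induce identical local input histories for the lower copy up to the moment it must commit its answer --- is false. The lower copy's input $c$ is wired to the upper copy's free output $g$, and $\psi'$ \emph{requires} that output to be the configuration stream whose onset is the first $\TRUE$ on $u$. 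So if $u$ triggers before $v$, the lower copy has already observed a $u$-dependent signal before (and throughout) the time it produces its own answer; the leak you worry about is real and cannot be blocked by the specification, since the leaking signal is precisely the answer the specification mandates. Consequently you cannot extract an A0 implementation in which the second process is independent of $u$, and the reduction as you argue it does not go through.

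The paper's soundness argument uses a different indistinguishability: not between two inputs seen by the lower copy, but between the two \emph{roles} a single copy of $\mathcal{F}$ might be playing. A process reading $(a,b,c)$ cannot tell whether $b$ is the (delayed) trigger of the other process or garbage, nor whether $c$ is the other process's tape output or an external input. On a crafted \emph{reference stream} (first configurations arriving on $c$, a start signal on $a$, later a start signal on $b$) one shows the process must emit the first $k$ configurations, then --- because the same $\mathcal{F}$ might be sitting in the other position and would then be triggered one configuration later --- it must emit $k+1$ of them, and so on ad infinitum. This forces any correct $\mathcal{F}$ to reproduce the entire Turing computation, so $\psi'$ is symmetrically realizable iff $M$ halts. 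Note that this is the only place the symmetry hypothesis enters; your proof as written never actually uses that both positions run the same implementation in the soundness direction, which is a sign something is off, since S2 without the symmetry requirement is a fork-free pipeline and hence decidable.
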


\begin{proof}
We show that there exists an implementation for the specification
   $\psi$ in the A0 architecture if and only if there exists a joint implementation for the two processes in the S2 architecture that satisfies
   $\psi' = \psi_d \wedge \LTLG ( v \leftrightarrow \LTLX o ) \wedge \LTLG ( w \leftrightarrow \LTLX p
   )$, where
   $\psi_d$ results from prefixing all occurrences of the signals
   $y$ and
   $z$ in
$\psi$ with a next-time operator.

The results of the two synthesis problems can be translated into each other.
A distributed implementation of $\psi$ over A0 is necessarily symmetric:
both processes output the same bitstream when reading a
   $\TRUE$ value as their local input for the first time. To obtain an implementation for S2, we simulate the process with input 
   $a$ and use
   $g$ as the local output.
   Additionally, we copy all values from
   $b$ to
   $e$, and
   $c$ to
   $f$.

Conversely, an implementation found by the symmetric synthesis of S2 provides an implementation of $\psi$ in A0. The key property of the architecture S2 is that the process does not know if the local input
   $b$ is the (delayed)
   $a$ input to the other process, or if its
   $c$ input is the (Turing machine tape) output of the other process. Thus, it cannot find out if it is the top process or the bottom process in the architecture and must prevent violating the specification in either case. A more detailed proof is given in the appendix.
\end{proof}

In order to reduce the symmetric synthesis problem of S2 to the symmetric synthesis problem of S0, we introduce compression functions that time-share multiple signals of S2 into a single signal in S0.

   Let
   $\AP$ be a set of signals.
   We call a function
   $f : (2^\AP)^\omega \rightarrow (2^{\{ \chi
    \}})^\omega$ for some Boolean variable $\chi$ a \newterm{compression function} if
   $f$ is injective.
   We call a function
   $f'$ that maps a specification over the signal set
   $\AP$ to a different specification over the signal set
   $\{ \chi
    \}$ the \newterm{adjunct compression function} to
   $f$ if for all
   $w \in
   (2^\AP)^\omega$ and specifications
   $\psi$ over
   $\AP$, we have that
   $w \models
   \psi$ if and only if
   $f(w) \models
   f'(\psi)$.

   In the appendix, we give such a pair of compression functions for
   LTL.  The compression mechanism 
  is illustrated in Figure~\ref{fig:ltlWordCompression}.
  One clock cycle in the four-bit-per-character version of a word is spread to 10 computation cycles in the one-bit-per-character version of the word.
  Every 10 cycles, the 2-cycle \newterm{character start sequence} (CSS)
  $\{ \chi \} \{ \chi
   \}$ is instantiated, followed by four two-cycle slots for every signal in
  $\AP$.
  Note that the construction ensures that whenever we have
  $\{ \chi \} \{ \chi \}
  \emptyset$ as a part in a compressed word, then we know that a character start sequence begins on the first occurrence of
  $\{ \chi
   \}$ in this part.
  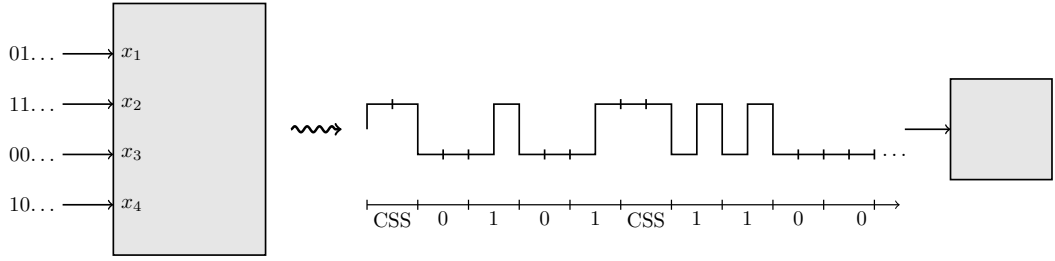
\begin{figure*}
   \resizebox{\linewidth}{!}{
    \begin{tikzpicture}[scale=0.85]
     \draw[thick,fill=black!10!white] (0,5) rectangle +(3,-5); \draw[thick,->] (-1,4) node[left]
     {01$\ldots$} -- (0,4) node[right]
     {$x_1$} ; \draw[thick,->] (-1,3) node[left]
     {11$\ldots$} -- (0,3) node[right]
     {$x_2$} ; \draw[thick,->] (-1,2) node[left]
     {00$\ldots$} -- (0,2) node[right]
     {$x_3$} ; \draw[thick,->] (-1,1) node[left]
     {10$\ldots$} -- (0,1) node[right]
     {$x_4$} ;

     \draw [very thick, ->, decorate, decoration={snake,amplitude=.5mm,segment length=2.07mm}] (3.5,2.5) -- (4.5,2.5);

     \draw[thick] (5,2.5) -- (5,3) -- (6,3) -- (6,2) -- (7.5,2) -- (7.5,3) -- (8,3) -- (8,2) -- (9.5,2) -- (9.5,3) -- (11.0,3) -- (11.0,2) -- (11.5,2) -- (11.5,3) -- (12,3) -- (12,2) -- (12.5,2) -- (12.5,3) -- (13,3) -- (13,2) -- (15,2) node[right]
     {$\ldots$};

     \draw[thick] (5.5,3.1) -- (5.5,2.9); \draw[thick] (6.5,2.1) -- (6.5,1.9); \draw[thick] (7,2.1) -- (7,1.9); \draw[thick] (8.5,2.1) -- (8.5,1.9); \draw[thick] (9,2.1) -- (9,1.9); \draw[thick] (10.0,3.1) -- (10.0,2.9); \draw[thick] (10.5,3.1) -- (10.5,2.9); \draw[thick] (13.5,2.1) -- (13.5,1.9); \draw[thick] (14.0,2.1) -- (14.0,1.9); \draw[thick] (14.5,2.1) -- (14.5,1.9); \draw[thick] (15.0,2.1) -- (15.0,1.9);

     \draw[thick,fill=black!10!white] (16.5,1.5) rectangle +(2,2); \draw[thick,->] (15.6,2.5) -- (16.5,2.5);

     \draw[semithick,->] (5,1) -- node[below] {CSS} (6,1) -- node[below] {0} (7,1) -- node[below] {1} (8,1) -- node[below] {0} (9,1) -- node[below] {1} (10,1) -- node[below] {CSS} (11,1) -- node[below] {1} (12,1) -- node[below] {1} (13,1) -- node[below] {0} (14,1) -- node[below] {0} (15.5,1); \draw[semithick] (5,0.9) -- (5,1.1); \draw[semithick] (6,0.9) -- (6,1.1); \draw[semithick] (7,0.9) -- (7,1.1); \draw[semithick] (8,0.9) -- (8,1.1); \draw[semithick] (9,0.9) -- (9,1.1); \draw[semithick] (10,0.9) -- (10,1.1); \draw[semithick] (11,0.9) -- (11,1.1); \draw[semithick] (12,0.9) -- (12,1.1); \draw[semithick] (13,0.9) -- (13,1.1); \draw[semithick] (14,0.9) -- (14,1.1); \draw[semithick] (15,0.9) -- (15,1.1);
    \end{tikzpicture}
   }
   \caption{An example for compressing a word with
    $|\AP| =
    4$.}
   \label{fig:ltlWordCompression}
  \end{figure*}

  \begin{theorem}\label{theorem:undecidabilityS2}
  The symmetric synthesis problem for architecture S0 is undecidable.
\end{theorem}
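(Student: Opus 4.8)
The plan is to reduce the symmetric synthesis problem of S2, shown undecidable in Lemma~\ref{lem:undecidabilityOfS2}, to that of S0, bridging the gap with the compression function $f$ and its adjunct $f'$. The two architectures differ only in channel width: each S2 process reads three input bits and writes three output bits per step, whereas in the pipeline $x \to y \to z$ of S0 each process reads a single bit $a$ and writes a single bit $b$. I would use the fact that every frame of an S0 computation (a character start sequence $\{\chi\}\{\chi\}$ followed by one two-cycle slot per signal, as in Figure~\ref{fig:ltlWordCompression}) encodes exactly one step of the corresponding S2 computation, so that time-multiplexing on the single wire simulates the wide channels.

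First I would fix the S2 specification $\psi'$ produced in the proof of Lemma~\ref{lem:undecidabilityOfS2} and build an S0 specification $\psi''$ over the signals $\{x,y,z\}$ in two layers. An outer layer $\varphi_\mathit{frame}$ is an LTL formula forcing every output stream to be well-formed, i.e.\ to consist of properly delimited frames; here I rely on the observation noted above that the pattern $\{\chi\}\{\chi\}\emptyset$ unambiguously marks the start of a frame, so well-formedness is a regular, LTL-expressible property. On words that pass this test, an inner layer $f'(\psi')$ asserts the compressed image of $\psi'$, so that by the defining property of the adjunct, $f(w)\models f'(\psi')$ iff $w\models\psi'$. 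Since in symmetric synthesis the specification need only be stated for one process and is then enforced for both, writing $\psi''$ for a single process suffices.

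Next I would establish the two directions. For soundness, given a symmetric S2 implementation of $\psi'$, I wrap its shared process in a \emph{decode-then-re-encode} shell: the shell gathers each incoming frame on $a$, decompresses it to the three input bits of the S2 process, feeds them to that process, and compresses the three resulting output bits into an outgoing frame on $b$. Because the same process (and the same shell) is used in both pipeline positions, the composite is again a single shared, finite-state S0 process, and its induced computation satisfies $\psi''$ by construction. For completeness, any symmetric S0 implementation satisfying $\psi''$ must, by $\varphi_\mathit{frame}$, emit well-formed frames; since $f$ is injective these frames decompress to a unique S2 behavior, and folding the fixed-length frames of the shared S0 process yields a single shared S2 process realizing $\psi'$. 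Chaining this with Lemma~\ref{lem:undecidabilityOfS2}, which in turn reduces from the halting problem via Lemma~\ref{lemma:pnueliroesner}, gives undecidability of S0.

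The main obstacle I anticipate is frame synchronization under the delays that both the architecture and the compression introduce. Each process output depends only on inputs seen so far, and the S2 reduction already inserts a one-step delay ($\psi_d$); once signals are spread over frames, $\varphi_\mathit{frame}$ and $f'$ must agree on a fixed phase so that the start sequence a process detects on its input aligns with the one it is required to emit on its output, and so that process $1$, which sees only the compressed stream $y$ rather than the global input $x$, can still self-synchronize from the $\{\chi\}\{\chi\}\emptyset$ marker. Getting this bookkeeping right, and checking that an adversarial S0 implementation cannot exploit a misaligned or degenerate frame to slip out of the specification, is where the real work lies; the injectivity of $f$ together with the unambiguity of the start marker are the two properties I would lean on to close that gap.
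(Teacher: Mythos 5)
Your proposal follows essentially the same route as the paper: reduce the symmetric synthesis problem of S2 to that of S0 by time-multiplexing each three-signal channel onto a single wire via the compression function $f^{\mathrm{LTL}}$, transform the specification with the adjunct $f'^{\mathrm{LTL}}$, and add LTL guards that excuse the system on malformed input streams while forcing the outputs to be well-formed compressed frames (the paper's $\phi_{invalid1}$, $\phi_{invalid2}$, and $\phi_{correct}$ play exactly the role of your $\varphi_{\mathit{frame}}$ and conditional inner layer). The synchronization concern you flag is resolved in the paper by the unambiguity of the $\{\chi\}\{\chi\}\emptyset$ character start sequence and the fixed frame length, just as you anticipate.
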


\begin{proof}
In order to reduce the symmetric synthesis problem of architecture $S0$ to the symmetric synthesis problem of architecture S2, we compress $u,v,w$ into signal $x$; $o,p,q$ into signal $y$; and $x,y,z$ into signal $z$. A more detailed proof is given in the appendix.
\end{proof}

\section{Conclusions}

In this paper, we have studied the problem of synthesizing symmetric
systems.  Our new synthesis algorithm is a useful tool in the development of
distributed algorithms, because it checks automatically if
certain properties in a design problem require symmetry breaking.  

Our algorithm synthesizes implementations of rotation-symmetric architectures, i.e., architectures where the processes observe all inputs. The undecidability result for the architecture S0 indicates that it is impossible to extend the synthesis algorithm to architectures where the processes no longer have access to all inputs.
A promising direction of research,
however, is to use our results to extend existing \emph{semi-algorithms} for
synthesis under incomplete information to such symmetric architectures.
An example for such an approach is \emph{bounded
  synthesis}~\cite{DBLP:conf/atva/ScheweF07a}, which determines if
there exists an implementation with at most $n$ states, where $n$
is a given bound. The specification is translated into a universal
co-B\"uchi automaton, which is then, together with the bound $n$,
encoded into a \newterm{satisfiability modulo theory} problem. To
ensure correctness under incomplete information, constraints are added
that ensure that if a process cannot distinguish two inputs, it
transitions to the same successor state. Similarly, for symmetric
synthesis, constraints can be added that ensure that the outputs of
the individual processes are identical in states that are
indistinguishable for them.

Algorithms for symmetric synthesis procedures also offer a new
perspective on the problem of synthesizing arbitrarily scalable
(i.\,e.\ \newterm{parametric}) systems. Due to the undecidability of
the problem, only very limited solutions to this problem have been
found so far. For example, 
Jacobs and Bloem~\cite{DBLP:journals/corr/JacobsB14} tackle the case of asynchronous processes with local input in a ring architecture and use the bounded synthesis approach mentioned above.
Emerson and Srinivasan
\cite{DBLP:conf/podc/EmersonS90} present a solution for a
multi-process version of a small subset of the temporal logic CTL
while Attie and Emerson \cite{DBLP:journals/toplas/AttieE98} give a
different solution allowing a bigger subset of CTL but only
guaranteeing correctness of the solution if certain other conditions
are fulfilled, like the dead-lock freeness of the solution
produced. In such a setting, symmetric synthesis can be used to detect
specifications that are unrealizable even for small system sizes -- if
there is no solution for a fixed number of processes $n$, then there
is certainly none for scalable systems as well.

\bibliography{bib}

\clearpage
\appendix

\section{Appendix -- Proof Details}

\subsection{Additional Preliminaries}

We use \newterm{Moore machines} as finite-state model for regular computation trees.
Formally, a Moore machine is a tuple $\mathcal{M} = (S,\mathcal{I},O,\delta,s_\mathit{init},L)$ with the (finite) set of states $S$, the input alphabet $\mathcal{I}$, the output alphabet $O$, the initial state $s_\mathit{init} \in S$, and the labelling function $L : S \rightarrow O$. 
A Moore machine induces a computation tree $\langle T, \tau \rangle$ with $T = \mathcal{I}^*$ and $\tau : T \rightarrow O$ such that for all $t_0 \ldots t_n \in T$, we have that $\tau(t_0 \ldots t_n) = L(\delta(\delta(\ldots(\delta(\delta(s_\mathit{init},t_0),t_1),\ldots), t_{n-1}),t_n))$. Moore machines induce regular computation trees, i.e., computation trees that only have a finite number of distinct sub-trees.

Given a Moore machine, an \newterm{extended computation tree} induced by it is the same as a computation tree induced by the Moore machine, except that the tree labels are in $S \times O$, where for every node $t$, the first label element of $\tau(t)$  describes the state of the Moore machine after reading the input $t$ from the initial state, and the second label element describes the last output after reading $t$ from the initial state as before.
\subsection{Additional Definitions}

In Definition~\ref{def:symmetricSynthesisProblem}, we used the standard definition of parallel composition to say what it means to plug a process implementation into a symmetric architecture. For the sake of completeness, let us formally define this special case of parallel composition.

  \begin{definition}
   \mylabel{def:AggregatedMooreMachine} Given an architecture
   $\mathcal{E} =
   (S,P,\AP^I_G,E^\mathit{in},E^\mathit{out})$ for some process interface
   $\mathcal{N} =
   (\AP^I,\AP^O)$ and some Moore machine
   $\mathcal{M} =
   (Q,I,O,\delta,q_0,L)$ with
   $I =
   2^{\AP^I}$ and
   $O =
   2^{\AP^O}$, we define the \newterm{aggregated Moore machine} of the architecture and $\mathcal{M}$ as
   $\mathcal{M}' =
   (Q',I',O',\delta',q'_0,L')$ with:
   \begin{itemize}
   \item
    $Q' = (P \rightarrow
    Q)$,
   \item
    $I' =
    2^{\AP^I_G}$,
   \item
    $O' =
    2^{S}$,
   \item
    for all
    $f \in
    Q'$, we have
    $L'(f) = \{ s \in S \mid \exists (p,x) \in P \times \AP^O: E^\mathit{out}(p,x) = s,\ x \in L(f(p))
     \}$, 
   \item
    for all
    $f \in
    Q'$ and
    $X \subseteq
    \AP_G^I$,
    $\delta'(f,X) =
    f'$ such that for all
    $p \in
    P$,
    $f'(p) = \delta(f(p),\{ x \in \AP^I_L \mid E^\mathit{in}(p,x) \in (X \uplus L(f)
     )\})$, and
   \item
    for all
    $p \in
    P$,
    $q_0(p) =
    q_0$.
   \end{itemize}
  \end{definition}
  This definition ensures that the values of all signals are ``exported'' from the aggregated finite-state machine.
  Thus, when specifying the system behaviour of an aggregated system in a language such as \newterm{linear-time temporal logic} (LTL), we can refer to the signals used internally between the components.

In the main part of the paper, we also define computation trees that encode the behavior of a rotation-symmetric architecture after we plug one process into it. If the process is a finite-state machine, then the resulting computation tree for the behavior of the complete architecture is regular, and hence can be translated (back) to a Moore machine. We call this Moore machine for the behavior of the complete rotation-symmetric architecture implementation the \newterm{symmetric product} of the single process, whose definition we give next. The reader is reminded that $O$ and $\mathcal{O}$ are defined on page~\pageref{def:mathcalO}.

  \begin{definition}[Symmetric product]
   \label{def:SymmetricProduct} Given a Moore machine
   $\mathcal{M} =
   (S,\mathcal{I},O,\delta,s_0, \allowbreak L)$, we say that a Moore machine
   $\mathcal{M}' =
   (S',\mathcal{I},\mathcal{O},\delta',s'_0,L')$ is the symmetric product of
   $\mathcal{M}$ if
   $S' =
   S^n$,
   $s'_0 =
   (s_0)^n$, and for all
   $s''_0, \ldots, s''_{n-1} \in
   S$,
   $(i_0, \ldots, i_{n-1}) \in
   \mathcal{I}$:
   \begin{equation*}
    \delta'((s''_0,\ldots,s''_{n-1}),(i_0,\ldots,i_{n-1})) = (s'''_0,\ldots,s'''_{n-1}) 
   \end{equation*}
   s.\,t.\
   $\forall 0 \leq j < n: s'''_j = \delta(s''_j,\rot((i_0,\ldots,i_{n-1}),-j))
   $ and
   $L'(s''_0, \allowbreak \ldots, \allowbreak s''_{n-1}) = (L(s''_0), \allowbreak L(s''_1), \allowbreak \ldots, \allowbreak
   L(s''_{n-1}))$.
  \end{definition}
  Note that Definition \ref{def:SymmetricProduct} is just a combination of Definition~\ref{def:rotationSymmetricArchitecture} and the usual definition of parallel composition of Moore machines, applied to architectures consisting of a single cycle of processes.

\subsection{Proof of the Symmetry Lemma}

Let in the following for every $i \in \NN$ the expression $\mathcal{O}_i$ denote the local output of process $i$, i.e., let us define $\mathcal{O}_i = 2^{\AP^O \times \{i\}}$.

\allowdisplaybreaks{}

\begin{proof}
   $\Leftarrow$: The fact that the computation tree induced by the symmetric product of some Moore machine has the symmetry property follows directly from the definitions.

   $\Rightarrow$: For the converse direction, we prove that from every regular computation tree with the symmetry property, we can construct a Moore machine that is an implementation for one process, and by taking the symmetric product of the Moore machine, we obtain a product machine whose computation tree is in turn the one that we started with.

   Let
   $\langle T, \tau
   \rangle$ be the computation tree to start with.
   As it is regular, we have an equivalence relation over the nodes in the tree.
   Let
   $[ \cdot
   ]$ be the function that maps a tree node in
   $t$ onto a tree node representing its equivalence class, so for all
   $t, t' \in
   T$, we have that the sub-trees induced by
   $t$ and
   $t'$ are the same if and only if
   $[t] =
   [t']$, and for every
   $t$ there is some
   $t'$ such that
   $[t] =
   t'$.
   We build a Moore machine for one process in the symmetric architecture from
   $\langle T, \tau
   \rangle$ by setting
   $\mathcal{M} =
   (S,\mathcal{I},O,\delta,s_\mathit{init},L)$ with:
   \begin{eqnarray*}
    S & = & \{ [t] \mid t \in T \} \\
    \delta(s,x) &
    = &
    [sx] \text{ for all } x \in \mathcal{I} \text{ and } s \in S \\
    s_\mathit{init} &
    = &
    [\epsilon] \\
    L(s) &
    = &
    \tau(s)|_{O_0} \text{ for all } s \in S
   \end{eqnarray*}
   We now show that the symmetric product of
   $\mathcal{M}$ induces a computation tree that is the same as
   $\langle T, \tau
   \rangle$.
   If we take the symmetric product (Definition \ref{def:SymmetricProduct}) of
   $\mathcal{M}$, we obtain
   $\mathcal{M}' =
   (S',\mathcal{I},\mathcal{O},\delta',s'_\mathit{init},L')$ with:
   \begin{eqnarray*}
    S' & = &
    \{ ([t_0], \ldots, [t_{n-1}]) \mid t_0, \ldots, t_{n-1} \in T \} \\
    \delta'((t_0,\ldots,t_{n-1}),x) &
    = &
    ([t_0 \rot(x,0)], [t_1 \rot(x,-1)], \ldots, [t_{n-1} \rot(x,-n+1)]) \\
    s_\mathit{init} &
    = &
    ([\epsilon],\ldots,[\epsilon]) \\
    L'((t_0,\ldots,t_{n-1})) &
    = &
    (\tau(t_0)|_{O_0},\tau(t_1)|_{O_0},\ldots,\tau(t_{n-1})|_{O_0})
   \end{eqnarray*}
   Let
   $\langle T', \tau'
   \rangle$ be the extended computation tree induced by
   $\mathcal{M}'$ with
   $\tau' : T' \rightarrow S' \times
   \mathcal{O}$.
   We can show by induction that for every
   $t \in
   T$, we have that
   $\tau'(t)|_{S'} = ([\rot(t,0)], \allowbreak{} [\rot(t,-1)], \allowbreak{} [\rot(t,-2)], \allowbreak{} \ldots,
   [\rot(t,-n+1)])$.
   The induction basis is trivial, as
   $\tau'(t)|_{S'} ( \epsilon ) = ([\epsilon],[\epsilon],[\epsilon], \ldots,
   [\epsilon])$.
   For the inductive step, we have:
   \begin{eqnarray}
   \label{stepA1} & & \tau'(tx)|_{S'}   \\
   \label{stepA2} &
    = &
    ([t_0 \rot(x,0)], [t_1 \rot(x,-1)], \ldots, [t_{n-1} \rot(x,-n+1)])  \\
    \nonumber &
    &
    \text{for } \tau'(t)|_{S'} ( t ) = (t_0,t_1,\ldots,t_{n-1}) \\
    \label{stepA3} &
    = &
    ([[\rot(t,0)] \rot(x,0)], \ldots, [[\rot(t,-n+1)] \rot(x,-n+1)]) \\
    \label{stepA4} &
    = &
    ([\rot(t,0) \rot(x,0)], \ldots, [\rot(t,-n+1) \rot(x,-n+1)]) \\
     &
    = &
    ([\rot(tx,0)], \ldots, [\rot(tx,-n+1)])
   \end{eqnarray}
	In step (\ref{stepA1})-(\ref{stepA2}) of this deduction, we applied the definitions of the elements of $\mathcal{M}$ and $\mathcal{M}'$.
   In step (\ref{stepA2})-(\ref{stepA3}), we used the inductive hypothesis.
   In step (\ref{stepA3})-(\ref{stepA4}), we used the regularity of the tree: for some
   $t \in
   T$ and
   $x \in
   \mathcal{I}$, we need to have
   $[[t]x] =
   [tx]$ as the subtree induced by
   $[t]x$ has to be the same as the one induced by
   $tx$, as otherwise
   $[t]$ and
   $t$ would not be in the same equivalence class of subtrees (which is a contradiction).
   The last step uses the fact that if we concatenate two strings that are rotated by the same number of indices, then we can also first concatenate, and then rotate.

   Now let us have a look at the outputs in the extended computation tree
   $\langle T', \tau'
   \rangle$.
   For every
   $t \in
   T'$, we have:
   \begin{eqnarray}
    & & \tau'(t)|_{\mathcal{O}} \\
    &
    = &
    L'(\tau'(t)|_{S'}) \\
    \label{stepB1} &
    = &
    L'([\rot(t,0)], \ldots, [\rot(t,-n+1)]) \\
    \label{stepB2} &
    = &
    (\tau([\rot(t,0)])|_{\mathcal{O}_0}, \ldots, \tau([\rot(t,-n-1)])|_{\mathcal{O}_0}) \\
    \label{stepB3} &
    = &
    (\tau(\rot(t,0))|_{\mathcal{O}_0}, \ldots, \tau(\rot(t,-n-1))|_{\mathcal{O}_0}) \\
    \label{stepB4} &
    = &
    (\tau(t)|_{\mathcal{O}_0}, \tau(t)|_{\mathcal{O}_1},\ldots, \tau(t)|_{\mathcal{O}_{n-1}}) \\
    \label{stepB5} &
    = &
    \tau(t) 
   \end{eqnarray}
   In step (\ref{stepB1})-(\ref{stepB2}), we simply applied the definition of
   $L'$.
   In step (\ref{stepB2})-(\ref{stepB3}), we used the fact that we are dealing with equivalence classes over nodes in the computation tree
   $\langle T, \tau
   \rangle$ that respect the labelling of the system.
   In step (\ref{stepB3})-(\ref{stepB4}), we use the symmetry property of
   $\langle T, \tau
   \rangle$.
   For every
   $i \in
   \{0,\ldots,n-1\}$, we have
   $\tau(\rot(t,i))|_{\mathcal{O}_0} =
   \rot(\tau(t),i)|_{\mathcal{O}_0}$ by this property, and then
   $rot(\tau(t),i)|_{\mathcal{O}_0} =
   \tau(t)|_{\mathcal{O}_i}$ by renaming.
   In the last step, we just plug together the tuple.
\end{proof}

\subsection{Correctness of the $\rep_S$ Function}

  The definition of the
  $\reps$ function in Section~\ref{subsec:definitionOfReps} is supposed to describe how to compute the symmetry degree of a word, i.e., the number of processes getting the same rotations of an input proposition valuation or the number of rotations of the output of the processes that lead to the same element of $\mathcal{O}$.
  We prove that the definition of the
  $\reps$ function achieves this goal in two steps and start with the following sub-lemma:

  \begin{lemma}
   \label{RepSLemma1} If there are precisely
   $m$ values
   $j \in
   \{0,\ldots,n-1\}$ (for some
   $m \in
   \NN$) such that
   $\rot(t,j) =
   t$ for some
   $t \in
   \mathcal{I}^*$, then the list of indices
   $L = \{\frac{0 \cdot n}{m}, \frac{1 \cdot n}{m}, \ldots,
    $ $\frac{(m-1) \cdot
     n}{m}\}$ is precisely the list of indices
   $\geq
   0$ but
   $<
   n$ such that for all
   $l \in
   L$, we have
   $\rot(t,l) =
   t$ but for all
   $l' \notin
   L$, we have
   $\rot(t,l') \neq
   t$ or either
   $l'<0$ or
   $l'\geq
   n$.
  \end{lemma}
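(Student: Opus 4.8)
The plan is to recognize that the set $H = \{ j \in \{0,\ldots,n-1\} \mid \rot(t,j) = t \}$ is exactly the stabilizer of $t$ under the rotation action on indices, and that this set forms a subgroup of $\ZZ / n \ZZ$. The entire lemma is then an instance of the classification of subgroups of a cyclic group: every subgroup of $\ZZ/n\ZZ$ of order $m$ equals the set of multiples of $n/m$, which is precisely the list $L$. I would prefer, however, to give the underlying division-algorithm argument directly, since it constructs $L$ explicitly rather than appealing to the classification as a black box.

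First I would record the two algebraic facts about the rotation operator that drive everything. Since the index shift in $\rot$ is taken modulo $n$, we have $\rot(t,n) = \rot(t,0) = t$, and directly from the definition the composition law $\rot(\rot(t,a),b) = \rot(t,a+b)$ holds for all integers $a,b$ (the two successive shifts of the process indices simply add). From these it is immediate that $0 \in H$ and that $H$ is closed under addition modulo $n$: if $\rot(t,a)=t$ and $\rot(t,b)=t$, then $\rot(t,a+b)=\rot(\rot(t,a),b)=\rot(t,b)=t$. Being a nonempty subset of the finite group $\ZZ/n\ZZ$ that is closed under the group operation, $H$ is a subgroup.

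Next I would carry out the concrete argument. If $m=1$, then $H=\{0\}=L$ and there is nothing more to show. Otherwise, let $d$ be the smallest strictly positive element of $H$. I claim every $j \in H$ is a multiple of $d$: writing $j = qd + r$ with $0 \le r < d$, and using $\rot(t,qd)=t$ (obtained by iterating $\rot(t,d)=t$ via the composition law), we get $\rot(t,r) = \rot(\rot(t,qd),r) = \rot(t,qd+r) = \rot(t,j) = t$, so minimality of $d$ forces $r=0$. The identical computation applied to $\rot(t,n)=t$ shows $d \divides n$. Hence the elements of $H$ are exactly the multiples of $d$ in $\{0,\ldots,n-1\}$, namely $0, d, 2d, \ldots, (\frac{n}{d}-1)d$, so $|H| = n/d$. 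Since $|H| = m$ by hypothesis, $d = n/m$, and therefore $H = \{0, \frac{n}{m}, \frac{2n}{m}, \ldots, \frac{(m-1)n}{m}\} = L$.

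The only delicate point is the bookkeeping of the modular arithmetic inside $\rot$: one must check that rotating by a negative or by a large amount is consistent with the $\MOD n$ convention, so that the composition identity $\rot(\rot(t,a),b)=\rot(t,a+b)$ is valid for all integers and the division-algorithm step goes through verbatim. This is routine once the composition law is verified straight from the definition of $\rot$, and it is the step I would write out most carefully to avoid an off-by-$n$ slip.
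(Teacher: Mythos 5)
Your proof is correct. Both you and the paper start from the same two facts (closure of the set of neutral rotations under addition modulo $n$, and $\rot(t,n)=t$), but you then diverge: the paper proves that the neutral indices must be \emph{equally spaced} by a contradiction argument on three consecutive elements $l<l'<l''$ with unequal gaps, whereas you identify the minimal positive neutral rotation $d$ and run the division algorithm to show every neutral index is a multiple of $d$ and that $d \divides n$. Your route is the standard classification of subgroups of $\ZZ/n\ZZ$ made explicit, and it is arguably cleaner: it avoids the paper's somewhat delicate case analysis (the paper itself concedes that ``the case that involves wrapping around in the modulo space can be proven similarly'' without spelling it out, and that the spacing between $n$ and the largest element of $L$ needs a separate remark), since the division-algorithm step handles all of that uniformly. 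The one point you rightly flag as needing care, the composition law $\rot(\rot(t,a),b)=\rot(t,a+b)$ for arbitrary integers, is immediate from the definition of $\rot$ because the index shift is reduced modulo $n$, so your argument goes through. The paper additionally notes that the lemma could be derived from the Fine--Wilf theorem on words; neither proof takes that route.
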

  \begin{proof}
   For all
   $j,j' \in
   L$, we know that
   $j+j' \in
   L$ as well since for all
   $t \in
   \mathcal{I}^*$,
   $\rot(t,{j+j'}) = \rot(\rot(t,j'),j) = \rot(t,j') =
   t$.
   Furthermore,
   $\rot(t,n)=t$.

   To show that all elements in
   $L \cup
   \{n\}$ are equally spaced (modulo
   $n$), consider the converse.
   So we have
   $0 \leq l<l'<l'' <
   n$ with
   $l'-l \neq
   l''-l'$ and there are no indices in
   $L$ in between
   $l'$ and
   $l$ or
   $l''$ and
   $l'$, respectively.
   By the argument above if
   $l'-l <
   l''-l'$ we also have
   $l'+(l'-l) \in
   L$ or if
   $l'-l >
   l''-l'$ we also have
   $l+(l''-l') \in
   L$, which is a contradiction.
   The case that involves wrapping around in the modulo space can be proven similarly.

   So we know that there are
   $m$ equally spaced elements in
   $L$, and by the same line of reasoning, we can also deduce that the spacing between the elements in $L$ is the same as the spacing between $n$ and the largest element in L.
   Since furthermore
   $\rot(t,0) = t$ and
   $\rot(t,n) = t$ for all
   $t \in
   \mathcal{I}^*$, the claim follows.
  \end{proof}
  Lemma \ref{RepSLemma1} can alternatively be shown by applying a theorem by Fine and Wilf \cite{Fine1965} on the combinatorics on words.
  To use it, we would however have to rearrange the letters in a word, and describing that construction would be more complicated than giving a direct proof, which is why the latter has been done here.
  \begin{lemma}
   \label{lem:repsNofRotations} For every
   $t_0 \ldots t_{k-1} \in \mathcal{I}^k, k \in
   \NN$, we have
   \begin{equation*}
    \reps(t_0 \ldots t_{k-1}) = |\{j \in \{0,\ldots,n-1\} : \rot(t_0 \ldots t_{k-1},j)= t_0 \ldots t_{k-1} \}|
   \end{equation*}
  \end{lemma}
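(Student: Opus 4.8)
The plan is to proceed by induction on the length $k$ of the word, using the recursive definition of $\reps$ together with the structural description of fixing sets furnished by Lemma~\ref{RepSLemma1}. For a word $t = t_0 \ldots t_{k-1} \in \mathcal{I}^k$, write $\mathrm{Fix}(t) = \{j \in \{0,\ldots,n-1\} : \rot(t,j) = t\}$ for the set of index rotations leaving $t$ unchanged; the goal is then to show $|\mathrm{Fix}(t)| = \reps(t)$. The base case $k = 0$ is immediate, since every rotation fixes the empty word and $|\mathrm{Fix}(\epsilon)| = n = \reps(\epsilon)$.

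For the inductive step I would first observe that, because $\rot$ acts on sequences position by position, a rotation fixes $t_0 \ldots t_{k-1}$ if and only if it fixes both the prefix $t_0 \ldots t_{k-2}$ and the final letter $t_{k-1}$. Hence $\mathrm{Fix}(t_0 \ldots t_{k-1}) = \mathrm{Fix}(t_0 \ldots t_{k-2}) \cap \mathrm{Fix}(t_{k-1})$, which mirrors exactly the structure of the recursion $\reps(t_0 \ldots t_{k-1}) = \gcd(\reps(t_0 \ldots t_{k-2}), \rep(t_{k-1}))$. By the inductive hypothesis we have $|\mathrm{Fix}(t_0 \ldots t_{k-2})| = \reps(t_0 \ldots t_{k-2})$, and by definition $|\mathrm{Fix}(t_{k-1})| = \rep(t_{k-1})$, so the whole step reduces to showing that the cardinality of the intersection equals the gcd of the two cardinalities.

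The heart of the argument is this last equality, and it is where Lemma~\ref{RepSLemma1} does the work. Applying that lemma once to the word $t_0 \ldots t_{k-2}$ and once to the single-letter word $t_{k-1}$ identifies $\mathrm{Fix}(t_0 \ldots t_{k-2})$ and $\mathrm{Fix}(t_{k-1})$ as the sets of multiples of $n/a$ and $n/b$ in $\{0,\ldots,n-1\}$, where $a = \reps(t_0 \ldots t_{k-2})$ and $b = \rep(t_{k-1})$, and where both $a \divides n$ and $b \divides n$. Their intersection is precisely the set of common multiples, i.e., the multiples of $\mathrm{lcm}(n/a, n/b)$ lying in $\{0,\ldots,n-1\}$, which has exactly $n / \mathrm{lcm}(n/a, n/b)$ elements. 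Invoking the identity $\mathrm{lcm}(n/a, n/b) = n/\gcd(a,b)$, this count is $\gcd(a,b)$, so that $|\mathrm{Fix}(t_0 \ldots t_{k-1})| = \gcd(a,b) = \reps(t_0 \ldots t_{k-1})$, closing the induction.

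The one delicate point, which I expect to be the main obstacle, is the number-theoretic identity $\mathrm{lcm}(n/a, n/b) \cdot \gcd(a,b) = n$ for divisors $a, b$ of $n$; it follows from $\gcd(n/a, n/b) = n/\mathrm{lcm}(a,b)$ combined with the product rule $\gcd(a,b)\cdot\mathrm{lcm}(a,b) = a b$, but it is worth spelling out, as this is the only place where the divisibility of $a$ and $b$ by $n$ is genuinely used. Everything else is routine bookkeeping on the recursion and a direct appeal to Lemma~\ref{RepSLemma1}.
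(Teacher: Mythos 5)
Your proof is correct and follows essentially the same route as the paper: induction on the word length, with Lemma~\ref{RepSLemma1} identifying the fixing sets of the prefix and of the last letter as the evenly spaced sets of multiples of $n/a$ and $n/b$. The only difference is cosmetic --- you count the intersection directly via the identity $\mathrm{lcm}(n/a,n/b)\cdot\gcd(a,b)=n$, whereas the paper reaches the same conclusion by showing that the number of neutral rotations of the full word divides both $a$ and $b$ and that, conversely, every rotation by a multiple of $n/\gcd(a,b)$ fixes the word.
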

  \begin{proof}
   The proof is done by induction on the length of
   $t$.

   \textbf{Basis:} Trivial, since
   $\rot(\epsilon,j)=\epsilon$ for every
   $j \in
   \ZZ$.

   \textbf{Inductive step:} Assume that the number of neutral rotations for
   $t_0 \ldots
   t_{k-2}$ is
   $m$ (we denote those rotation values
   $j$ of
   $t_0 \ldots
   t_{k-2}$ to be \newterm{neutral} for which
   $\rot(t_0 \ldots t_{k-2},j) = t_0 \ldots
   t_{k-2}$) and the number of neutral rotations for
   $t_0 \ldots
   t_{k-1}$ is
   $m'$ .
   By the inductive hypothesis,
   $\reps(t_0 \ldots
   t_{k-2})=m$.

   Clearly,
   $m'$ is a divisor of
   $m$ since otherwise there exists a
   $y \in \{\frac{0 \cdot n}{m'}, \ldots,
    $ $\frac{(m'-1) \cdot
     n}{m'}\}$ such that
   $\rot(t_0 \ldots t_{k-2},y) \allowbreak \neq t_0 \ldots
   t_{k-2}$, so:
   \begin{equation*}
    \rot(t_0 \ldots t_{k-1},y) = \rot(t_0 \ldots t_{k-2},y) \rot(t_{k-1},y) \neq t_0 \ldots t_{k-2} \rot(t_{k-1},y) = t_0 \ldots t_{k-1}
   \end{equation*}
   Analogously,
   $m'$ is a divisor of
   $\rep(t_k)$. In both cases, we would otherwise get a contradiction with Lemma~\ref{RepSLemma1}.

   On the other hand, for every
   $m'$ that is a divisor of
   $m$ and
   $\rep(t_k)$, we have for all
   $y \in \{\frac{0 \cdot n}{m'}, \ldots, \frac{(m'-1) \cdot
     n}{m'}\}$:
   \begin{align*}
    & \rot(t_0 \ldots t_{k-1},y) \\ 
    =\ &
    \rot(t_0 \ldots t_{k-2},y) \rot(t_{k-1},y) \\
    =\ &
    t_0 \ldots t_{k-2} \rot(t_{k-1},y) & \text{by Lemma \ref{RepSLemma1}}\\ 
    =\ &
    t_0 \ldots t_{k-2} t_{k-1} &  \text{by Lemma \ref{RepSLemma1}}\\
    =\ &
    t_0 \ldots t_{k-1}
   \end{align*}
   Clearly, the greatest common divisor of
   $m$ and
   $\rep(t_k)$ is the (unique) greatest such number, therefore
   $\reps(t_0 \ldots
   t_{k-1})=m'$.
  \end{proof}

\subsection{Proof of Lemma~\ref{lem:symmetryLemma2} (The Second Symmetry Lemma)}

To keep the presentation of the following proof concise, we need to give a name to the normalization function that maps all input streams that can be unified by rotation onto the same input stream.
  For all
  $t \in
  \mathcal{I}^*$, we define:
  \begin{equation*}
   \eta_S(t) = \min_{i \in \{ 0, \ldots, n-1 \}} \rot(t,i)
  \end{equation*}

  \begin{definition}[Symmetric completion]
   Let
   $\langle T, \tau
   \rangle$ be a computation tree with
   $T =
   \mathcal{I}^*$ and
   $\tau : T \rightarrow
   \mathcal{O}$.
   We call a tree
   $\langle T, \tau'
   \rangle$ with
   $\tau' : T \rightarrow
   \mathcal{O}$ a symmetric completion of
   $\langle T, \tau
   \rangle$ if for all
   $t \in
   T$, we have
   $\tau'(t) =
   \rot(\tau(\eta_S(t)),i)$ for some
   $i \in
   \NN$ with
   $\rot(\eta_S(t),i) =
   t$.
  \end{definition}
We are now ready to discuss the proof of Lemma~\ref{lem:symmetryLemma2}.
\begin{proof}
   The first part of the claim follows directly from the definition of the symmetric completion. For all $t \in T$, $i \in \NN$, we have
   \begin{eqnarray*}
   &  & \tau(\rot(t,i)) \\
   &= & \tau(\rot(\eta_S(t),i+j)) \\
   &= & \rot(\tau(\eta_S(t)),i+j) \\
   &= & \rot(\tau(t),i)
   \end{eqnarray*}%
   for $j = \min_{j' \in \NN} \rot(r,-j')$ if the symmetric completion actually exists. The symmetric completion exists if and only if having 
   $\tau'(t) =
   \rot(\tau(\eta_S(t)),i)$ for some
   $i \in
   \NN$ with
   $\rot(\eta_S(t),i) =
   t$ for every $t \in T$ is possible. The only way for this property to be unfulfillable is if for some $t \in T$ and $i,j \in \NN$, we have $\rot(t,i) = \rot(t,j)$ but $\tau(\rot(t,i)) \neq \tau(\rot(t,j))$. Equivalently we can ask if there is the possibility to have $t = \rot(t,j-i)$ but $\tau(t) \neq \tau(\rot(t),j-i)$. Since we require $\langle T, \tau \rangle$ to have the property that for every $t \in T$, we have $\rep_S(t) \divides \rep_S(\tau(t))$ and by Lemma~\ref{RepSLemma1}, the neural rotations are evenly spaced, we can see that $t = \rot(t,j-i)$ and $\tau(t) \neq \tau(\rot(t,j-i))$ cannot hold at the same time.
   
   For the second part, we assume that we are given some finite-state machine $\mathcal{M} =
   (S,\mathcal{I},\mathcal{O},s_\mathit{init},\delta,L)$ that induces the computation tree $\langle T, \tau \rangle$. Since the classes of regular trees and finite-state (Moore) machines $\mathcal{M}$ are isomorphic, this assumption comes without loss of generality.
   
   We prove the regularity of $\langle T, \tau'\rangle$ by building a finite-state machine
   $\mathcal{M}' =
   (S',\mathcal{I},\mathcal{O}, \allowbreak{} s'_\mathit{init}, \allowbreak{} \delta', \allowbreak{} L')$ that induces $\langle T, \tau' \rangle$.
   
   The states in
   $\mathcal{M}$ are the equivalence classes of nodes in
   $\langle T, \tau
   \rangle$.
   Without loss of generality, let us assume that elements
   $t,t' \in
   T$ such that
   $\rep_S(t) \neq
   \rep_S(t')$ are never put into the same equivalence class.
   Note that this can only blow up the number of equivalence classes by a factor of at most
   $n$.\footnote{In principle, forcing two tree nodes to not be in the same equivalence class if they do not have some common property can make the number of equivalence classes infinite.
    This is however not the case here as for every
    $t\in T$ and
    $x \in
    \mathcal{I}$,
    $\rep_S(tx)$ can be computed from
    $\rep_S(t)$ and
    $x$.
    This way, computing the
    $\rep_S$ values of the prefix input stream can be done on-the-fly by a finite-state machine while reading the input, and when we compute its product with
    $\mathcal{M}$, the set of states of the resulting finite-state machine is then finite and serves as set of equivalence classes for the tree
    $\langle T, \tau
    \rangle$ such that no tree nodes with different
    $\rep_S$ values are put into the same equivalence class.} Furthermore assume that the representative element chosen for every equivalence class is a normalized prefix input whenever possible (again, without loss of generality).
   Thus, for every
   $s \in S$ representing a normalized input sequence and
   $i \in
   \mathcal{I}$,
   $\min_{u \in \NN}
   \rot(si,u)$ is normalised as well, and so is then
   $[\min_{u \in \NN}
   \rot(si,u)]$.
   We construct
   $\mathcal{M}'$ as follows:
   \begin{eqnarray*}
    S' & = & S \times \{0, \ldots, -n-1\} \\
    s'_\mathit{init} &
    = &
    (s_\mathit{init},0) \\
    \delta'((s,k),i) &
    = &
    ([\min_{j \in \NN} \rot(\rot(s,k) i,j)], - \argmin_{j \in \NN} \rot(\rot(s,k) i,j)) \\
    L'(s,k) &
    = &
    \rot(L(s),k)
   \end{eqnarray*}
   Let us now prove that
   $\mathcal{M}'$ represents the symmetric completion of
   $\langle T, \tau
   \rangle$.
   For this, it suffices to show that for every
   $t \in
   T$, we have
   $\delta'(s_\mathit{init},t) = ([\eta_S(t)],- \argmin_{j \in \NN} \allowbreak
   \rot(t,j))$.   
   By the definition of
   $L'$, we then have that
   $L'(\delta'(s_\mathit{init},t)) = \rot(L(\eta_S(t)), \allowbreak -\argmin_{j \in \NN}
   \rot(t,j))$ (as
   $L([\eta_S(t)]) = L(\eta_S(t))
   $), which corresponds to the symmetric completion of
   $\langle T, \tau
   \rangle$.

   Let us show that we have
   $\delta'(s_\mathit{init},t) = ([\eta_S(t)],- \argmin_{j \in \NN}
   \rot(t,j))$ for every
   $t \in
   T$ by induction.
   The induction basis for
   $t =
   \epsilon$ is trivial.
   For the inductive step, we have (for
   $t \in
   T$ and
   $x \in
   \mathcal{I}$):
   \begin{eqnarray}
   \!\!\!\! \!\!\!\! \!\!\!\! \delta'(s_\mathit{init},tx) & = &
    ([\min_{j \in \NN} \rot(\rot([\eta_S(t)],- \argmin_{j \in \NN} \rot(t,j)) x,j)], \nonumber \\
    &
    &
    \quad - \argmin_{j \in \NN} \rot(\rot([\eta_S(t)],- \argmin_{j \in \NN} \rot(t,j)) x,j)) \mylabel{eqnlineb1} \\
    &
    = &
    ([\min_{j \in \NN} \rot(\rot( [\min_{l \in \NN} \rot(t,l)],- \argmin_{j \in \NN} \rot(t,j)) x,j)], \nonumber \\
    &
    &
    \quad - \argmin_{j \in \NN} \rot(\rot([\min_{l \in \NN} \rot(t,l)],- \argmin_{j \in \NN} \rot(t,j)) x,j)) \mylabel{eqnlineb2} \\
    &
    = &
    ([\min_{j \in \NN} \rot(\rot( \min_{l \in \NN} \rot(t,l),- \argmin_{j \in \NN} \rot(t,j)) x,j)], \nonumber \\
    &
    &
    \quad - \argmin_{j \in \NN} \rot(\rot(\min_{l \in \NN} \rot(t,l),- \argmin_{j \in \NN} \rot(t,j)) x,j)) \mylabel{eqnlineb3} \\
    &
    = &
    ([\min_{j \in \NN} \rot(tx,j)],- \argmin_{j \in \NN} \rot(tx,j)) \mylabel{eqnlineb4} \\
    &
    = &
    ([\eta_S(tx)],- \argmin_{j \in \NN} \rot(tx,j)) \mylabel{eqnlineb5}
   \end{eqnarray}
   The first line in this deduction is obtained by applying the induction hypothesis to the definition of
   $\delta$.
   From line (\ref{eqnlineb1}) to line (\ref{eqnlineb2}), we applied the definition of
   $\eta_S$.
   From line (\ref{eqnlineb2}) to line (\ref{eqnlineb3}), we used the property that by the fact that we are concerned with an equivalence relation over subtrees, we know that for all
   $t \in
   T$ and
   $i \in
   \mathcal{I}$, we have
   $[[t]x]=[tx]$.
   This fact also holds for all rotations of $[t]x$ and $tx$.
   From line (\ref{eqnlineb3}) to line (\ref{eqnlineb4}), %
we simplify
   $\rot( \min_{l \in \NN} \rot(t,l),- \argmin_{j \in \NN}
   \rot(t,j))$ to
   $t$, as in this equation, the two rotations even up.
   The step to the last line just uses the definition of the $\eta_S$ function.
  \end{proof}

\subsection{A Complete Proof of EXPTIME-hardness (in the Number of Processes) for Rotation-symmetric Synthesis}

We start by concretizing the definition of space-bounded alternating Turing machines.
  \begin{definition}[\cite{DBLP:journals/jacm/ChandraKS81}]
   We say that the Turing machine 
   $M=(Q,\Sigma,\Gamma,\delta,q_0,g)$
   is $f(k)$-SPACE bounded for some function
   $f : \NN \rightarrow
   \NN$ if for every accepted word of length
   $k$,
   $M$ also accepts the word when considering all runs whose space usage exceeds
   $f(k)$ to be rejecting.
  \end{definition}
  For the scope of this paper, when discussing
  $f(k)$-space bounded Turing machines, we only consider functions
  $f(k)$ that are easy to compute (in time polynomial in
  $f(k)$).
  For every accepted word, we can arrange a set of runs of an alternating Turing machine for that word in a run tree that splits whenever a state with universal branching is visited. All runs in this tree also do not exceed the space bound.
  By definition, for every accepted word, there exists such a tree, and all of its leafs are accepting configurations.

  \begin{lemma}
   Given an
   $f(k)$-space bounded alternating Turing machine
   $M=(Q,\Sigma,\Gamma, \allowbreak \delta, \allowbreak q_0, \allowbreak g)$, we can reduce the acceptance of a word
   $w \in
   \Sigma^k$ by
   $M$ to the symmetric realizability problem of
   $n =
   f(k)$ processes with a specification in LTL of size polynomial in
   $|Q| \cdot |\Gamma| \cdot
   |w|$.
  \end{lemma}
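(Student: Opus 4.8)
The plan is to give a polynomial-time reduction that turns $M$ and $w$ into a rotation-symmetric architecture with $n = f(k)$ processes and an LTL formula $\varphi$ over $\AP^I_G \cup S$ that is symmetrically realizable if and only if $M$ \emph{rejects} $w$ (equivalently, $M$ accepts $w$ iff $\varphi$ is unrealizable, which suffices for hardness since the relevant complexity classes are closed under complement). I would identify each of the $n$ processes with one cell of an $f(k)$-cell tape, placing cell $i$ at process $p_i$. The local output set $\AP^O$ would encode, for that cell, (i) its tape symbol in $\Gamma$, (ii) a bit saying whether the head currently sits on the cell, (iii) the control state in $Q$ (meaningful only when the head bit is set), and (iv) two marker bits delimiting the left and right ends of the tape segment inside the ring. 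The local input set $\AP^I_L$ would carry a start-designation bit and a branch-selection bit.

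The formula $\varphi$ would collect the usual tableau constraints for a Turing-machine computation, but written \emph{only} for the reference process $p_0$ and its two ring-neighbours $p_{n-1}$ and $p_1$, all addressed through $S = \AP^O \times \{0,\dots,n-1\}$: an initialization clause stating that wherever the environment first raises the start bit, the head, the state $q_0$, the word $w$, and the two markers are installed on the $k$ consecutive cells beginning there; a local consistency clause, guarded by $\LTLG$, expressing that the symbol/head/state of $p_0$ at the next step is the $\delta$-correct function of $p_{n-1},p_0,p_1$ at the current step; and a safety clause $\LTLG \neg(\mathrm{head}_0 \wedge \mathrm{accept}_0)$ forbidding accepting states. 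The essential use of rotation symmetry is that I do \emph{not} conjoin these clauses over all $n$ cells, which would blow $\varphi$ up by a factor of $n$. Instead I write them once, for $p_0$; since the induced tree has the symmetry property, every rotation of every trace is again a trace, so enforcing $\varphi$ on all paths is equivalent to enforcing $\rot(\varphi,i)$ on all paths, i.e.\ the correct local rule at every cell. This keeps $|\varphi|$ polynomial in $|Q|\cdot|\Gamma|\cdot|w|$ and independent of $n$.

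For the alternation I would let the environment resolve the branch at existential configurations (through the branch-selection input of the head cell) and let the implementation resolve the branch at universal configurations (its output merely has to be $\delta$-consistent with one of the two successors). The justification is a determinacy argument rather than intuition: in the acceptance game for $M$ the $\exists$-player owns existential configurations and the $\forall$-player owns universal ones, and $M$ rejects $w$ exactly when the $\forall$-player can keep every play out of accepting configurations. Because the implementation must maintain the safety clause against \emph{all} inputs, the implementation plays precisely the $\forall$-role and the environment the $\exists$-role; hence a symmetric implementation exists iff the $\forall$-player wins iff $M$ rejects $w$. The start-designation is likewise handed to the environment: if it supplies an ill-formed start, the markers collide and the specification permits the simulation to halt, which is non-accepting and therefore harmless.

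The step I expect to be the main obstacle is the interaction between the linear tape and the circular, origin-free ring. Because a symmetric architecture has no distinguished process, the head and the tape boundaries can only be pinned down by an asymmetric input, so I must make the start bit break the symmetry and carve the length-$k$ tape out of the ring with the two markers, ensuring that the head never silently falls off the segment and that every degenerate environment behaviour (no designation, overlapping designations, or the head reaching a marker after exceeding the $f(k)$ space bound) is routed into the ``simulation may stop'' escape hatch so that it counts as non-accepting. Getting these boundary clauses right, while keeping them local to $p_0$ and its neighbours so that the symmetry argument still applies and the formula stays polynomial, is the delicate part; the remaining tableau bookkeeping is routine.
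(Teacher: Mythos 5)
Your proposal is correct and follows essentially the same route as the paper's proof: one process per tape cell, outputs encoding symbol/head/state plus end markers, a start bit that breaks the ring symmetry and installs $w$, the specification written only for $p_0$ and its neighbours and propagated to all cells by rotation symmetry, the environment resolving existential branches and the system resolving universal ones, and realizability holding iff $M$ rejects $w$. The only (immaterial) differences are cosmetic, e.g.\ the paper reuses a single local input bit $C$ for both the start designation and the branch selection, and it makes the determinacy argument only implicitly.
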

  \begin{proof}
   The proof is done by constructing a specification
   $\psi$ that is realizable in the symmetric setting if and only if
   $w = w_0 \ldots
   w_{k-1}$ is \textbf{not} accepted.
   We define
   $\psi = \psi_1 \wedge
   \psi_2$ for
   $\psi_1$ and
   $\psi_2$ to be given below.
   We have $\AP^I_L = \{C\}$
   and 
   $\AP^O = \{S^0, \ldots, S^{|\AP^O|}
    \}$ of sufficient cardinality to encode every element from the set
   $\Gamma' = (\Gamma \cup \{ \epsilon \} \cup (Q \times \Gamma)) \times
   2^{\{\lfloor,\rfloor\}}$.
   This way, the output signals of the individual processes represent locations on the Turing machine tape.
   For simplicity, in the following, for all
   $0 \leq j <
   n$, we denote the set of output atomic propositions for process
   $j$ by
   $S_j$ (with the additional shorthand $S_{-1} := S_{k-2}$).
   We also encode the current state of the machine and tape end markers onto the tape.
   Note that in the symmetric setting we do not have a designated process for the initial state.
   The formula
   $\psi_1$ makes sure that precisely the processes retrieving a
   $C$ in the first round start a Turing computation (provided that the initial tapes would not collide), so
   $\psi_1 = (\neg C_{-k+1} \wedge \ldots \wedge \neg C_{-1} \wedge C_0 \wedge \neg C_1 \wedge \ldots \wedge \neg C_{k-1}) \Rightarrow (S_0 = (q_0,w_0,\{\lfloor\}) \wedge\ S_1 = (w_1,\emptyset) \wedge\ \ldots\ \wedge\ S_{k-2} = (w_{k-2},\emptyset)\ \wedge\ S_{k-1} =
   (w_{k-1},\{\rfloor\}))$.

   In order to deal with multiple computations starting in the first round, the delimiters
   $\lfloor$ and
   $\rfloor$ mark the ends of the tape of a machine.
   The specification part
   $\psi_2$ makes sure that the simulated Turing machine(s) do(es) not accept the input word, i.\,e.\ the processes have to simulate the computation on the Turing tape(s) but never reach an accepting state.
   For this, we let the choice of the next state for existential branching be made by the external input whereas for universal branching, the choice is made by the system.

   We syntactically extend LTL slightly for improved readability by allowing Boolean operations over sets of symbols.
   For example,
   $S_0 =
   X(S_0)$ can be unfolded to
   $\bigwedge_{p \in S_0}(p \leftrightarrow
   \LTLX s)$.
   Furthermore,
   $S_{-1},S_0,S_{+1} \xrightarrow[a]{\delta}
   \LTLX(S_{-1}),\LTLX(S_0),\LTLX(S_{+1})$ for
   $a \subseteq
   \{1,2\}$ is $\TRUE$ if for the part of the tape represented by
   $S_{-1}|_\Gamma,S|_\Gamma,S_{+1}|_\Gamma$, the transition from
   $S_0|_Q$ can be made such that afterwards
   $S_{-1},S_0,S_{+1}$ is a valid part of the configuration (at the same place on the tape), the next state is included at the respective position on the tape, and we have taken the
   $b$th of the two possible transitions for
   $b \in
   a$.
   Furthermore the tape is extended correctly such that the computation is never left of the
   $\lfloor$ marker or right of the
   $\rfloor$ marker (as seen from the initial configuration).
   If two markers collide, the next configuration is obtained by simply removing the state information from the part of the tape.
   If a rejecting state is reached, this rule is also applied.
   Note that without loss of generality, we can assume that precisely two transitions are possible in each state.
   For
   $\tilde \Gamma = (\epsilon \cup \Gamma) \times
   2^{\{\lfloor,\rfloor\}}$, we set:
   \allowdisplaybreaks
   \begin{align*}
    \psi_2 &
    = \LTLG (S_0 \text{ is in }Q \times \Gamma \times 2^{\{\lfloor,\rfloor\}} \wedge g(S_0|_Q)=\text{\quotedwedge} \rightarrow \\
    &
    \quad ( S_{-1},S_0,S_{+1} \xrightarrow[\{1,2\}]{\delta} \LTLX(S_{-1}),\LTLX(S_0),\LTLX(S_{+1}))) \\
    &
    \wedge \LTLG (S_0 \text{ is in }Q \times \Gamma \times 2^{\{\lfloor,\rfloor\}} \wedge g(S_0|_Q)=\text{\quotedvee} \wedge \neg C \rightarrow \\
    &
    \quad ( S_{-1},S_0,S_{+1} \xrightarrow[\{1\}]{\delta} \LTLX(S_{-1}),\LTLX(S_0),\LTLX(S_{+1}))) \\
    &
    \wedge \LTLG (S_0 \text{ is in }Q \times \Gamma \times 2^{\{\lfloor,\rfloor\}} \wedge g(S_0|_Q)=\text{\quotedvee} \wedge C \rightarrow \\
    &
    \quad ( S_{-1},S_0,S_{+1} \xrightarrow[\{2\}]{\delta} \LTLX(S_{-1}),\LTLX(S_0),\LTLX(S_{+1}))) \\
    &
    \wedge \LTLG ((S_0 \text{ is in } \tilde \Gamma) \wedge (S_{-1} \text{ is in } \tilde \Gamma) \wedge (S_{+1} \text{ is in }\tilde \Gamma) \rightarrow \\
    &
    \quad S_0 = \LTLX(S_0))\\
    &
    \wedge \LTLG (\neg g(S_0|_Q) \text{ is accepting})
   \end{align*}

   Assume that there exists an accepting tree of
   $M$'s computations for
   $w$.
   In this case, the environment could set the input to the first process to
   $\TRUE$ in the first round and to
   $\FALSE$ for the other processes.
   Hence, there is only one computation of the Turing machine being simulated.
   By the environment choosing the existential branching according to the acceptance tree it can be assured that eventually an accepting state is reached, which is not allowed by
   $\psi$.
   Therefore
   $\psi$ is unrealizable in this setting.

   On the other hand, if
   $w$ is not accepted by
   $M$, then there exists no tree of accepting runs of
   $M$ (that do not use more than
   $f(k)$ space) for
   $w$.
   Since in this case the implementation can decide which transition to take in case of universal branching, it can assure that the run simulated by an implementation of
   $\psi$ either ends in a rejecting state or exceeds the space limit.
   Since in case of collisions of end markers on the tape, the state information can be removed from the output and the processes can then output their last tape contents forever, the specification is trivially fulfilled in this case.
   The same applies for the case of reaching a rejecting state, so
   $\psi$ is realizable by a symmetric system.
  \end{proof}

  \subsection{Proof of Lemma 13}

\begin{proof}
  Let
   $\psi$ be the specification from Lemma~\ref{lemma:pnueliroesner}.
   We show that there exists an implementation for the specification
   $\psi$ in the A0 architecture if and only if there exists a joint implementation for the two processes in the S2 architecture that satisfies
   $\psi' = \psi_d \wedge \LTLG ( v \leftrightarrow \LTLX o ) \wedge \LTLG ( w \leftrightarrow \LTLX p
   )$, where
   $\psi_d$ results from prefixing all occurrences of the signals
   $y$ and
   $z$ in
   $\psi$ with a next-time operator. Without loss of generality, we assume that $\psi$ encodes the termination of a Turing machine. %

   $\Rightarrow$: If we have an implementation for the A0 architecture satisfying
   $\psi$, then the implementation needs to be a symmetric one: both processes output the same bitstream when reading a
   $\TRUE$ value from their respective local input signal for the first time.
   We can take an implementation for one of these processes, and turn it into an implementation for a process in the S2 architecture: just simulate the process over the input signal
   $a$ and use
   $g$ as the local output for the tape content.
   At the same time, copy all values from
   $b$ to
   $e$, and
   from $c$ to
   $f$.
   This makes sure that
   $\LTLG ( v \leftrightarrow \LTLX o ) \wedge \LTLG ( w \leftrightarrow \LTLX p
   )$ is satisfied by the resulting system.
   Since the bottom process in the S2 architecture then outputs
   $q$ with one computation cycle of delay to
   $y$,
   $y$ in the S2 architecture always represents the output of the left process in the A0 architecture with a delay of one cycle.
   For the signal line from
   $v$ to
   $z$, the same line of reasoning holds: the data from
   $v$ appears at the signal
   $o$ with a delay of one, and then, since every process in the S2 architecture simulates a process in the A0 architecture reading from
   $a$ and writing to
   $g$,
   $z$ is the output of the A0 process for the input
   $v$ with a delay of one cycle.
   Thus,
   $\psi_d$ is fulfilled by the system.
   Taking all of these facts together, the architecture with the implementation also fulfills
   $\psi$.

   $\Leftarrow$: Assume that
   $\psi'$ is realizable for the S2 architecture and we have a process that ensures that
   $\psi'$ is satisfied in the Moore machine that represents the behavior of the complete architecture with the process implementation.
   We argue that the process has to behave like a process for the A0 architecture for the input
   $a$ and output
   $g$.
   The important feature of this architecture is that the process does not know if the local input
   $b$ is the (delayed)
   $a$ input to the other process, or if its
   $c$ input is the (Turing machine tape) output of the other process. Thus, it cannot find out if it is the top process or the bottom process in the architecture and must prevent violating the specification in either case.

   Recall that without loss of generality, we assumed $\psi_D$ to encode the computation/acceptance of a Turing machine (in the same way as Pnueli and Rosner defined it~\cite{DBLP:conf/focs/PnueliR90}).
   First of all, in order to satisfy the specification, both processes have to output the first two Turing tape computations %
   on their
   $g$ outputs when obtaining a
   $\TRUE$ value to the
   $a$ input.
   This follows from the fact that the specification only allows the processes to forward information from the inputs
   $b$ and
   $c$, so the overall requirement to start with the first two tape contents on
   $y$ when reading a true-bit on
   $u$, and to start with the first two tape contents on
   $z$ when reading from
   $v$ can only be fulfilled by distributing the writing of the two tapes among the processes.

Now assume that a process receives the first Turing tape configurations on local input $c$, then a start signal on input $a$ while the second Turing tape content starts, and after both Turing tape contents have been seen, we get a starting signal on $b$. Let this input stream be called the \newterm{reference stream}.

Since the process does not know whether it is the top-most one in the architecture, it also has to output the third Turing tape configuration on its local output $g$ after the first two of these, as the input to $b$ might have been forwarded to the bottom-most process, where it triggered the other process to output the first two tape configurations. Then, $\psi_D$ would be violated if the top-most process did not output the first three configurations. But then, if the process is the lower-most one, as the local input $c$ might actually be the output of the top-most process, must also output the first three Turing tape configurations in order not to violate the specification. Note that the lower-most process must do that regardless of its local input $b$, as it is just forwarded garbage from global input $w$ then. 

But then, this means that the top-most process also has to output the first four Turing tape configurations when reading the reference stream by the same reasoning - it might be the top-most process, and since it has forwarded a signal that would trigger the other process to start the Turing tape computation one tape content later, it would otherwise violate the specification.

We can iterate this argument ad infinitum. Now if the process ensures that the overall system, when the process is instantiated in the S2 architecture, satisfies the specification, then this means that the Turing machine has to terminate --- since we can force the system to output the correct Turing tape computations along a run of the Turing machine, and we also require it to eventually halt, there is no other way that the specification can be satisfied.
\end{proof}

\subsection{Proof of Theorem 14}

We first provide a pair of compression functions for LTL.

  \begin{definition}
  \label{def:ltlCompressionFunction}
   Let
   $\AP$ be some set of signals, which, w.l.o.g, we assume to be
   $\{x_1, \ldots,
    x_n\}$.
   We define a compression function
   $f^\mathrm{LTL}$ over
   $\AP$ as follows: for every
   $w = w_0 w_1 w_2 \ldots \in
   (2^\AP)^\omega$, we set
   $f^\mathrm{LTL}(w) = w'_0 w'_1 w'_2 w'_3
   \ldots$ such that:
   \begin{itemize}
   \item
    $\forall i \in
    \NN$,
    $w'_{2i(n+1)} = w'_{2i(n+1)+1} =
    \{\chi\}$,
   \item
    $\forall i \in
    \NN$ and
    $j \in \{1, \ldots,
     n\}$, we have
    $w'_{2i(n+1)+2j} =
    \emptyset$, and
   \item
    $\forall i \in
    \NN$ and
    $j \in \{1, \ldots,
     n\}$, we have
    $\chi \in w'_{2i(n+1)+2j+1}$ if and only if
    $x_j \in
    w_i$.
   \end{itemize}
  \end{definition}
  The compression function
  $f^{LTL}$ is exemplified in Figure \ref{fig:ltlWordCompression}.
 
  Our interest in
  $f^\mathrm{LTL}$ lies in the fact
  $f^\mathrm{LTL}$ has an adjunct specification compression function
  $f'^{\mathrm{LTL}}$ for LTL properties.
  For the following lemma, we use the fact that the LTL operators $\LTLF$ and $\LTLG$ can be encoded using the $\LTLX$ and $\LTLU$ operators, so we only need to consider the latter two here. Given a word $w = w_0 w_1 w_2 w_3 \ldots \in (2^\AP)^\omega$, an index $i \in \NN$, and an LTL formula $\varphi$, we write $w,i \models \varphi$ if and only if $w_i w_{i+1} w_{i+2} \ldots \models \varphi$.

  \begin{lemma}
   \label{lem:specificationCompression} A specification compression function
   $f'^{\mathrm{LTL}}$ that corresponds to
   $f^\mathrm{LTL}$ can be defined inductively over the structure of an LTL formula as follows (for
   $\AP = \{x_1, \ldots,
    x_n\}$):
   \begin{itemize}
   \item For $\psi = \FALSE$, we set $f'^{\mathrm{LTL}}(\psi) = \FALSE$. Likewise, for $\psi = \TRUE$, we set $f'^{\mathrm{LTL}}(\psi) = \TRUE$.
   \item
    For
    $\psi = \psi_1 \wedge
    \psi_2$,
    $\psi = \psi_1 \vee
    \psi_2$, and
    $\psi = \neg
    \psi_1$ for some LTL formulas
    $\psi_1$ and
    $\psi_2$, we have
    $f'^{\mathrm{LTL}}(\psi) = f'^{\mathrm{LTL}}(\psi_1) \wedge
    f'^{\mathrm{LTL}}(\psi_2)$,
    $f'^{\mathrm{LTL}}(\psi) = f'^{\mathrm{LTL}}(\psi_1) \vee
    f'^{\mathrm{LTL}}(\psi_2)$, and
    $f'^{\mathrm{LTL}}(\psi) = \neg
    f'^{\mathrm{LTL}}(\psi_1)$, respectively.
   \item
    For
    $\psi =
    x_j$ for some
    $x_j \in
    \AP$, we set
    $f'^{\mathrm{LTL}}(\psi) = \LTLX^{2j+1}
    \chi$.
   \item For $\psi = \LTLX\, \psi_1$ for some LTL formula $\psi_1$, we set $f'^{\mathrm{LTL}}(\psi) = \LTLX^{2(n+1)} f'^{\mathrm{LTL}}(\psi_1)$.
   \item
    For
    $\psi = \psi_1\ \LTLU\
    \psi_2$, we set
    $f'^{\mathrm{LTL}}(\psi) = ((\chi \wedge \LTLX \chi \wedge \LTLX^2 \neg \chi) \rightarrow \psi_1)\ \LTLU\ ((\chi \wedge \LTLX \chi \wedge \LTLX^2 \neg \chi) \wedge
    \psi_2)$.
   \end{itemize}
  \end{lemma}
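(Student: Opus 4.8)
The plan is to prove the stronger, position-indexed statement that for every LTL formula $\psi$ over $\AP$, every word $w = w_0 w_1 w_2 \ldots \in (2^\AP)^\omega$, and every index $i \in \NN$, we have $w, i \models \psi$ if and only if $f^{\mathrm{LTL}}(w), 2i(n+1) \models f'^{\mathrm{LTL}}(\psi)$; the lemma's claim (that $w \models \psi$ iff $f^{\mathrm{LTL}}(w) \models f'^{\mathrm{LTL}}(\psi)$) is then the special case $i = 0$. The reason for indexing by the block-start positions $2i(n+1)$ rather than by arbitrary positions is that $f^{\mathrm{LTL}}$ spreads each original character $w_i$ over a block of $2(n+1)$ consecutive positions, and every formula produced by $f'^{\mathrm{LTL}}$ is designed to be evaluated at the beginning of such a block.

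Before the induction I would establish the key structural fact that the \emph{character start sequence detector} $\mathrm{css} := (\chi \wedge \LTLX\,\chi \wedge \LTLX^2\,\neg\chi)$ holds in $f^{\mathrm{LTL}}(w)$ at position $p$ if and only if $p = 2i(n+1)$ for some $i \in \NN$. This is a finite case analysis on the block layout: within a block the only two adjacent positions both carrying $\chi$ are the two CSS positions $2i(n+1)$ and $2i(n+1)+1$, since every other occurrence of $\chi$ sits at an odd offset $2i(n+1)+2j+1$ and is immediately preceded and followed by $\emptyset$-valued even offsets; the one remaining candidate, the last odd offset $2i(n+1)+2n+1$ together with the next block start, is ruled out because that next block start is itself followed by a second $\chi$, so the conjunct $\LTLX^2\,\neg\chi$ fails there. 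Hence $\mathrm{css}$ marks exactly the block boundaries.

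I would then proceed by structural induction on $\psi$. The cases $\psi = \TRUE$ and $\psi = \FALSE$ are immediate, and $\psi = x_j$ follows because $f'^{\mathrm{LTL}}(x_j) = \LTLX^{2j+1}\chi$ queries offset $2j+1$ inside block $i$, which carries $\chi$ exactly when $x_j \in w_i$. The Boolean connectives are handled directly by the induction hypothesis. For $\psi = \LTLX\,\psi_1$, the next original position $i+1$ corresponds to the block start $2(i+1)(n+1) = 2i(n+1) + 2(n+1)$, which is exactly where $f'^{\mathrm{LTL}}(\LTLX\,\psi_1) = \LTLX^{2(n+1)} f'^{\mathrm{LTL}}(\psi_1)$ jumps, so the claim reduces to the induction hypothesis for $\psi_1$.

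The main work is the until case $\psi = \psi_1\,\LTLU\,\psi_2$, whose translation is $(\mathrm{css} \rightarrow f'^{\mathrm{LTL}}(\psi_1))\,\LTLU\,(\mathrm{css}\wedge f'^{\mathrm{LTL}}(\psi_2))$, where $f'^{\mathrm{LTL}}$ is applied recursively to the operands. Here I would use the CSS fact to argue that any witness position for the compressed until must satisfy $\mathrm{css}$ and is therefore a block start $2m(n+1)$ with $m \ge i$, at which $f'^{\mathrm{LTL}}(\psi_2)$ holds iff $w, m \models \psi_2$ by the induction hypothesis; and that the only positions strictly between $2i(n+1)$ and $2m(n+1)$ at which the left operand can impose a constraint are again the block starts $2l(n+1)$ with $i \le l < m$, where the guard $\mathrm{css} \rightarrow f'^{\mathrm{LTL}}(\psi_1)$ forces $f'^{\mathrm{LTL}}(\psi_1)$, i.e.\ $w, l \models \psi_1$. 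At all non-block-start positions the guard is vacuously true and imposes no spurious constraint. Running this equivalence in both directions yields exactly the standard until semantics $\exists m \ge i: (w,m \models \psi_2 \wedge \forall l\ (i \le l < m \rightarrow w,l \models \psi_1))$. The derived operators $\LTLF$ and $\LTLG$ need no separate treatment, as they are expressed through $\LTLU$ and $\LTLX$. The subtle point to get right, and the one I expect to be the main obstacle, is precisely this alignment of the until guard with the block boundaries: one must be sure that $\mathrm{css}$ neither fires spuriously inside a block (handled by the structural fact above) nor fails to fire at a legitimate block start, so that both operands of the compressed until are sampled exactly at the original positions and nowhere else.
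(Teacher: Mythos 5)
Your proposal is correct and follows essentially the same route as the paper's proof: the same strengthened induction hypothesis ($w,i \models \psi$ iff $f^{\mathrm{LTL}}(w), 2i(n+1) \models f'^{\mathrm{LTL}}(\psi)$), the same case analysis, and the same key observation that the character-start-sequence detector $\chi \wedge \LTLX\chi \wedge \LTLX^2\neg\chi$ fires exactly at block boundaries. You actually spell out the boundary-detection argument (including the near-miss at offset $2i(n+1)+2n+1$) in more detail than the paper, which merely asserts it.
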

  \begin{proof}

   We show the lemma by structural induction.
   More specifically, we show that for every
   $w \in
   (2^\AP)^\omega$ and
   $i \in
   \NN$, we have
   $w,i \models
   \psi$ if and only if
   $w', 2(n+1) \cdot i \models
   f'^{\mathrm{LTL}}(\psi)$ for
   $w' =
   f^{\mathrm{LTL}}(w)$.
   \begin{itemize}
   \item
    Case
    $\psi =
    \TRUE$,
    $\psi =
    \FALSE$: trivial
   \item
    Case
    $\psi =
    x_j$ for some
    $x_j \in
    \AP$.
    The definition of
    $f^{\mathrm{LTL}}$ ensures that for all
    $j \in \{1, \ldots,
     n\}$, we have
    $x_j \in
    w_i$ if and only
    $\chi \in
    w'_{2i(n+1)+2j+1}$.
    Thus, we have
    $w',2i(n+1) \models \LTLX^{2j+1}
    \chi$ if and only if
    $w,i \models
    x_j$.
   \item
    Case
    $\psi = \psi_1 \wedge
    \psi_2$,
    $\psi = \psi_1 \vee
    \psi_2$, and
    $\psi = \neg
    \psi_1$: trivial
   \item Case $\psi = \LTLX\, \psi_1$: follows from the inductive hypothesis and the definitions of $f^\mathrm{LTL}$ and $f'^{\mathrm{LTL}}$
   \item
    Case
    $\psi = \psi_1\ \LTLU\
    \psi_2$.
    Here,
    $\chi \wedge \LTLX \chi \wedge \LTLX^2 \neg
    \chi$ is true precisely at time points
    $2(n+1)i$ for some
    $i \in
    \NN$, thus anything of interest for evaluating $f'^{\mathrm{LTL}}(\psi)$ happens at these time instants.
    For $f'^{\mathrm{LTL}}(\psi)$ to be true, we must have that for some
    $i$,
    $w'$, we get that
    $w',2(n+1)i \models
    f'^{\mathrm{LTL}}(\psi_2)$, which by the induction hypothesis is equivalent to
    $w, i \models
    \psi_2$.
    For all
    $j <
    i$, we must have
    $w', 2(n+1)j \models
    f'^{\mathrm{LTL}}(\psi_1)$, which by the induction hypothesis is equivalent to
    $w,j \models
    \psi_1$.
   \end{itemize}
  \end{proof}
  Reconsider the setting from Figure \ref{fig:ltlWordCompression}.
  As an example, the specification
  $x_4\ \LTLU\
  x_3$ for the original case translates to
  $((\chi \wedge \LTLX \chi \wedge \LTLX^2 \neg \chi) \rightarrow \LTLX^9 \chi)\ \LTLU\ ((\chi \wedge \LTLX \chi \wedge \LTLX^2 \neg \chi) \wedge \LTLX^7
  \chi)$ for the compressed case.
  In both variants, the specification is not fulfilled for this example.

  Using Lemma \ref{lem:specificationCompression}, we can now prove Theorem~\ref{theorem:undecidabilityS2}:

\begin{proof}
   Since we can compress (1) the input signals
   $u$,$v$, and
   $z$ into one signal (named $x$ in the S0 architecture),
   (2) $o$,
   $p$, and
   $q$ into one signal,
   (3) $u$, 
   $v$, and
   $z$ into one signal, and (4) adapt
   $\psi'$ accordingly  
   (and all of these compressions can use the same encoding), the undecidability of the symmetric synthesis problem for the S0 architecture follows from the undecidability of the symmetric synthesis problem for the S2 architecture.
   Definition~\ref{def:ltlCompressionFunction} and Lemma~\ref{lem:specificationCompression} describe how this word compression step can be performed.
   However, we need to make sure that (1) the correct functioning of the processes in the S0 architecture is only enforced on input streams that result from compressing a word over $2^{u,v,w}$ according to Definition \ref{def:ltlCompressionFunction} and (2) the output streams of the processes need to be correct compressions if the input to the first process is a valid compressed word.
   For this, we take the adapted version of $\psi'$ and replace it by $(\psi' \wedge \phi_{correct}) \vee \LTLF \phi_{invalid1} \vee \phi_{invalid2}$, where $\phi_{invalid1}$ and $\phi_{invalid2}$ encode that a part of the input stream is found that shows that it can not have been obtained by word compression according to Definition~\ref{def:ltlCompressionFunction}. 
   Formally, we can describe these properties as:
   \begin{eqnarray*}
   \phi_{invalid1} & = & \chi \wedge \LTLX \chi \wedge \LTLX \LTLX \neg \chi \wedge \left( \neg \LTLX^8 \chi \vee \neg \LTLX^9 \chi \vee \LTLX^{10} \chi \vee \bigvee_{i \in \{1, \ldots, 3\}} \LTLX^{2i} \chi \right) \\
   \phi_{invalid2} & = & \neg \chi \vee \neg \LTLX \chi \vee \LTLX \LTLX \chi
   \end{eqnarray*}
   Intuitively, $\phi_{invalid1}$ states that starting from a character start sequence, the next character start sequence does not come after exactly 8 cycles (or we have an illegal bit encoding along the way), and $\phi_{invalid2}$ states that the input stream does not start with a character start sequence.
   In the same way, we can encode that $y$ and $z$ represent correctly compressed streams:
   \begin{multline}
   \phi_{correct} = \bigwedge_{p \in \{x,y\}} p \wedge \LTLX p \wedge \LTLX^2 \neg p \wedge \LTLG \bigg( \neg p \vee \neg \LTLX p \vee \LTLX^2 p \vee \\
   \Big( \bigwedge_{i \in \{1, \ldots, 3\}} \LTLX^{2i} p \Big) \wedge \LTLX^8 p \wedge \LTLX^9 p \wedge \LTLX^{10} \neg p \bigg)
   \end{multline}

\end{proof}

\end{document}